\documentclass{article}

\usepackage{microtype}
\usepackage{graphicx}
\usepackage{subcaption}
\usepackage{booktabs} 
\usepackage{comment}
\usepackage{amsmath}   
\usepackage{amssymb}  
\usepackage{tabularx}  
\usepackage{booktabs}  
\usepackage{array}     
\usepackage{multirow}  
\usepackage{enumitem} 
\usepackage{amsthm}
\newtheorem{lemma}{Lemma}  
\usepackage{bm}
\usepackage{xcolor}
\definecolor{jujube}{RGB}{192, 0, 0}  

\usepackage{hyperref}
\PassOptionsToPackage{table}{xcolor}

\usepackage[accepted]{icml2026}
\usepackage{amsmath}
\usepackage{amssymb}
\usepackage{mathtools}
\usepackage{amsthm}

\usepackage[capitalize,noabbrev]{cleveref}

\theoremstyle{plain}

\theoremstyle{definition}

\theoremstyle{remark}

\usepackage{arydshln}
\definecolor{tan}{rgb}{0.937, 0.902, 0.843}

\usepackage[textsize=tiny]{todonotes}

\icmltitlerunning{SynGR: Cross-Modal Synergy for Generative Recommendation}

\begin{document}

\twocolumn[
  \icmltitle{ SynGR: Unleashing the Potential of Cross-Modal Synergy for \\ Generative Recommendation}



\icmlsetsymbol{equal}{*}
\icmlsetsymbol{cos}{$^\dagger$}

\begin{icmlauthorlist}
	\icmlauthor{Wei Chen}{a,equal}
	\icmlauthor{Xingyu Guo}{a,equal}
	\icmlauthor{Shuang Li}{a,cos}
	\icmlauthor{Fuwei Zhang}{a}
		\icmlauthor{Meng Yuan}{a}
			\icmlauthor{Jing Fan}{a} \\
				\icmlauthor{Zhao Zhang}{b}
					\icmlauthor{Deqing Wang}{b}
						\icmlauthor{Fuzhen Zhuang}{a,cos}
\end{icmlauthorlist}

\icmlaffiliation{a}{School of Artificial Intelligence, Beihang University, Beijing, China}
\icmlaffiliation{b}{School of Computer Science and Engineering, Beihang University, Beijing, China}

\icmlcorrespondingauthor{Fuzhen Zhuang}{zhuangfuzhen@buaa.edu.cn}
\icmlcorrespondingauthor{Shuang Li}{shuangliai@buaa.edu.cn}

\icmlkeywords{Machine Learning, ICML}

\vskip 0.3in
]


\printAffiliationsAndNotice{\icmlEqualContribution} %

\begin{abstract}Generative Recommendation (GR) has emerged as a promising paradigm by formulating item recommendation as a sequence-to-sequence generation task over item identifiers.
	Recent studies have incorporated multimodal signals to provide richer token-level evidence for generation.
	However, existing approaches largely rely on alignment-centric fusion and underexplore synergistic information across modalities.
	In practice, synergistic information plays a critical role in capturing emergent item properties that cannot be inferred from any single modality alone.
	Such properties encode intrinsic item semantics and guide user preferences, enabling models to move beyond surface-level feature matching.
	To address this limitation, we propose \textbf{SynGR}, a synergistic generative recommendation framework that explicitly encourages the exploitation of cross-modal dependencies during generation. By constraining overreliance on dominant modalities, SynGR enables the model to capture emergent item semantics beyond shared or modality-specific signals. Extensive experiments across three benchmark datasets demonstrate that SynGR achieves superior  performance. Code is available at: \href{https://github.com/gxingyu/SynGR}{SynGR}.
\end{abstract}

\section{Introduction}
\begin{figure}[t]
	\setlength{\fboxrule}{0.pt}
	\setlength{\fboxsep}{0.pt}
	\fbox{
		\includegraphics[width=0.98\linewidth]{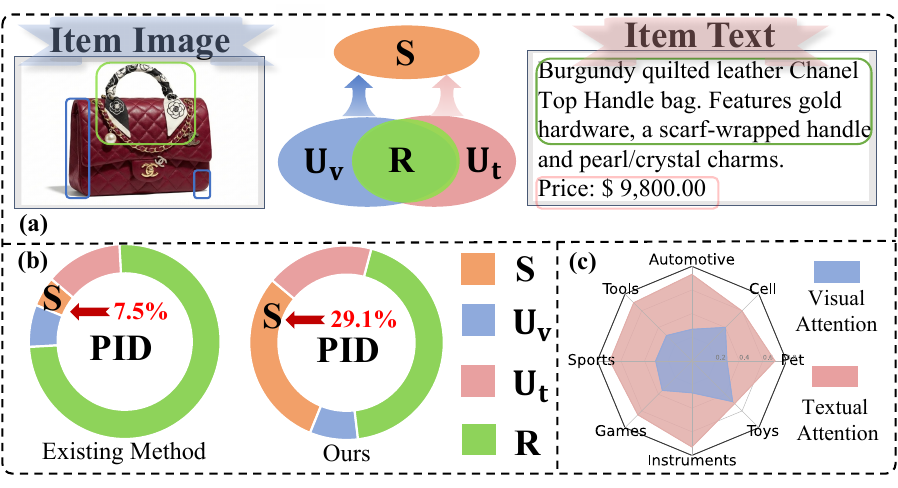}
	}
	\caption{ \textbf{(a)} Illustration of cross-modal information decomposition, where $\textbf{S}, \textbf{R}, \mathbf{U}_\mathbf{t}$ and $\mathbf{U}_\mathbf{v}$ denote \textbf{synergistic}, \textbf{redundant}, and modality-specific \textbf{unique} information, respectively. \textbf{(b)} Comparison of synergistic components estimated using a normalized PID-inspired performance decomposition \cite{kolchinsky2022novel}. \textbf{(c)} The distribution of visual and textual attention across datasets.}
	\label{intro}
\end{figure}	
The evolution of recommender systems has recently shifted from traditional discriminative ranking to Generative Recommendation (GR) \cite{ko2022survey, ji2024genrec, zhao2025generative,chen2024fairgap,chen2025fairdgcl}. Unlike conventional methods that map users and items to latent embeddings for heuristic scoring, GR formulates user behavior modeling as a sequence-to-sequence generation task \cite{li2024large,li2025survey}. By representing items with discrete semantic identifiers, such as quantized tokens or hierarchical IDs \cite{qu2025tokenrec}, these frameworks utilize the sequential reasoning power of Pre-trained Language Models (PLMs) to capture complex and evolving user preferences \cite{hou2024bridging,he2025plum,yuan2025hyperbolic,huang2026adaptive}.

Recent studies \cite{liu2024mmgrec,wang2025generative}  in GR have begun to incorporate multimodal signals, such as visual and textual information, to provide richer item evidence and improve the generation of discrete semantic identifiers. However, the practical gains of these approaches remain limited. Most existing methods \cite{zhu2025beyond,zhai2025multimodal, zhang2025multi} rely on alignment-centric fusion strategies that enforce cross-modal consistency through shared representations.  While such designs are effective for basic semantic matching, they primarily encourage modality agreement rather than modality interaction, and consequently struggle to capture intricate synergistic information that arises only from the joint interplay across modalities.

Figure~\ref{intro}(a) illustrates the critical role of synergistic information $\mathbf{S}$ in capturing emergent item properties that are essential for accurate recommendation. Taking a luxury handbag as an example, the visual modality $\mathbf{U}_{\mathbf{v}}$ captures aesthetic nuances such as quilted leather textures and fine-grained embellishments, while the textual modality $\mathbf{U}_{\mathbf{t}}$ provides explicit signals like brand identity and pricing. Individually, neither modality is sufficient to characterize the item’s full semantic profile. Instead, their synergy $\mathbf{S}$ facilitates the emergence of higher-level attributes, such as brand prestige and luxury positioning, which are unattainable from any single modality alone. These emergent properties constitute the intrinsic semantics that shape user preferences, enabling the model to move beyond surface-level feature matching toward a deeper understanding of user intent.
Despite its importance, the synergistic component $\mathbf{S}$ remains insufficiently explored in existing generative paradigms, which often emphasize redundant or modality-unique signals. As an illustrative case, our empirical analysis in Figure~\ref{intro}(b) reveals that even a representative state-of-the-art method, MACRec \cite{zhang2025multi}, captures a mere 7.5\% of its learned representation as synergistic information on the Arts dataset\footnote{The calculation details are shown in Appendix \ref{id}.}. This observation suggests a systemic challenge in current multimodal GR: the inherent difficulty of effectively \textbf{unleashing the latent potential of cross-modal synergy}.

This limitation stems from inherent disparities between modalities under the generative training objective. Textual inputs are discrete, sequential, and semantically compact, with individual tokens often encoding highly discriminative signals such as brand, category, or style, whereas visual embeddings are continuous, high-dimensional, and noisy, requiring significant aggregation to manifest coherent semantics \cite{parcalabescu2023mm}. When such heterogeneous modalities are mapped into codebooks of comparable capacity, the fixed codebook budget represents fundamentally unequal semantic depth, leading to a mismatch in effective information density.
In recommendation scenarios, this imbalance allows certain modalities to provide a more direct and reliable mapping to the target identifier, forming a path of least resistance for minimizing the generation loss \cite{lin2024revisiting}. Consistent with this behavior, Figure~\ref{intro}(c) reveals that across multiple datasets, the current GR model assigns disproportionately higher attention weights to the textual modality, biasing learning toward text-dominant representations during next-token prediction. Consequently, complementary evidence from secondary modalities is progressively deemphasized. Since synergistic information requires high-order interactions, this unimodal convergence suppresses the emergence of complex multimodal patterns.

Motivated by these observations, we argue that bridging the synergy gap requires explicit intervention to prevent generative models from collapsing into unimodal shortcuts. To this end, we propose \textbf{SynGR}, a synergistic generative recommendation framework that actively disrupts the path of least resistance during training. The key idea is to constrain the model’s reliance on dominant modalities, thereby compelling it to leverage higher-order cross-modal dependencies to satisfy the generative objective. SynGR achieves this through a saliency-aware masking mechanism that adaptively attenuates dominant signals and encourages the exploration of latent synergistic information across modalities. To further stabilize and refine these emergent semantics, the framework incorporates a synergy-oriented contrastive objective that explicitly promotes cross-modal interaction. Our main contributions are summarized as follows:

(i) \textbf{Insight.}
We identify a synergy gap in generative recommendation, showing that information density mismatch across modalities induces unimodal shortcuts and suppresses higher-order cross-modal semantics.
(ii) \textbf{Method.}
We propose SynGR, a synergistic generative recommendation framework that explicitly disrupts unimodal shortcuts and compels the model to leverage cross-modal dependencies during generation.
(iii) \textbf{Evaluation.}
We demonstrate through extensive experiments that SynGR consistently enhances synergistic learning and improves recommendation performance across three real-world datasets, achieving average gains of around 10\% over strong baselines.

\section{Related Work}
\textbf{Sequential Recommendation.} Sequential recommendation \cite{boka2024survey} aims to capture the evolution of user interests from historical interaction logs. Early research predominantly focused on temporal dependencies using Recurrent Neural Networks (RNNs) \cite{medsker2001recurrent}, with pioneering models such as GRU4Rec~\cite{hidasi2015session} and NARM~\cite{li2017neural} establishing the foundation for session-based modeling. To address the limitations of RNNs in capturing long-range dependencies, SASRec~\cite{kang2018self} introduced self-attention mechanisms, while BERT4Rec~\cite{sun2019bert4rec} employed a bidirectional Cloze objective to enhance sequence representations. This field has recently shifted toward a unified paradigm where recommendation is treated as a language modeling task. For instance, P5~\cite{geng2022recommendation,tao2025autopcr} leveraged PLMs to convert diverse recommendation tasks into a standardized text-to-text format. To further alleviate data sparsity, researchers extended this paradigm to the multimodal domain. Notably, VIP5~\cite{geng2023vip5} served as a multimodal foundation model by integrating visual and textual signals into the generative framework.

\textbf{Generative Recommendation.} The advent of Large Language Models (LLMs) \cite{naveed2025comprehensive,zhang2025multi1,zhang2025hiergr,10800533,YANG2026132599,li2026comprehensive,cao2026,liang2026render,yang2026unihoi} has pivoted item retrieval toward Generative Recommendation (GR), redefining it as a conditional next-token prediction task. TIGER~\cite{rajput2023recommender} pioneered this paradigm by quantizing items into hierarchical semantic IDs via RQ-VAE. Subsequent optimizations include LC-Rec~\cite{zheng2024adapting}, which harnessed LLM reasoning for fine-tuning, and LETTER~\cite{wang2024learnable}, which integrated collaborative signals to enhance codebook distinctiveness.
The frontier has recently shifted toward multimodal GR. MMGRec~\cite{liu2024mmgrec} and MQL4GRec~\cite{zhai2025multimodal,li2025catch} explored multimodal blending and cross-domain knowledge transfer through quantized languages. More recently, MACRec~\cite{zhang2025multi} introduced a multi-aspect quantization framework to mitigate semantic loss and capture complementary information via explicit alignment.
Despite these advances, existing methods primarily focus on alignment or fusion while overlooking information synergy between modalities \cite{liu2024multimodal}.  Excessive redundancy can overshadow cross-modal synergy, which refers to emergent, non-redundant information arising from multimodal interactions. Our SynGR addresses this issue by explicitly promoting synergistic information while suppressing redundant signals.

\section{Theoretical Analysis}
In this section, we provide a formal information-theoretic analysis of multimodal GR. Specifically, we analyze why alignment-based methods fall short in capturing cross-modal interactions and formalize how synergistic information can be effectively exploited for discrete token generation.
\begin{figure*}[t] \centering \setlength{\fboxrule}{0.pt} \setlength{\fboxsep}{0.pt} \fbox{ \includegraphics[width=0.99\linewidth]{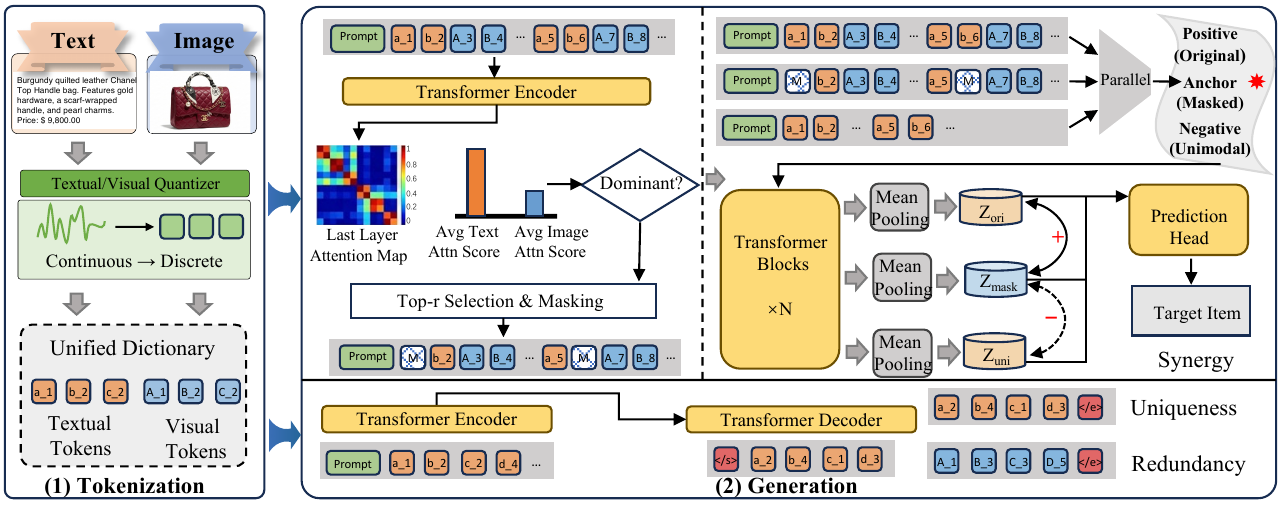} }
	\caption{Overview of the proposed SynGR framework. (1) In the tokenization phase, continuous textual and visual features are discretized into a unified dictionary through respective quantizers. (2) The generation phase begins with saliency diagnosis, where a transformer encoder extracts attention maps to identify dominant modalities for adaptive top-r masking. Subsequently, the original, masked, and unimodal sequences are processed in parallel through transformer blocks to derive the embeddings $\mathbf{Z}_{\text{ori}}$, $\mathbf{Z}_{\text{mask}}$, and $\mathbf{Z}_\text{uni}$ for synergistic contrastive learning. Finally, these representations are optimized via a transformer decoder to capture synergistic information.} \label{model}
\end{figure*}
\subsection{Problem Setup}
\label{sec:theory_setup}
 Let $\mathcal{H} = \{\mathbf{x}_1, \mathbf{x}_2, \dots, \mathbf{x}_L\}$ denote a user’s historical interaction sequence of length $L$. Each item is associated with visual tokens $\mathbf{x}_{v,i}$ and textual tokens $\mathbf{x}_{t,i}$. In GR, the multimodal histories
$\mathbf{X}_\mathbf{v} = \{\mathbf{x}_{v,1}, \dots, \mathbf{x}_{v,L}\}$ and $\mathbf{X}_\mathbf{t} = \{\mathbf{x}_{t,1}, \dots, \mathbf{x}_{t,L}\}$ are treated as model inputs. The objective is to learn a generative model that predicts the discrete semantic identifier of the next item $\mathbf{Y}$ by modeling the conditional distribution $\mathrm{P}(\mathbf{Y} \mid \mathbf{X}_\mathbf{v}, \mathbf{X}_\mathbf{t})$.
From the information-theoretic perspective \cite{lin1998information}, this corresponds to maximizing the task-relevant information that the joint multimodal history provides about the target $\mathbf{Y}$, which is quantified by the mutual information $\mathrm{I}(\mathbf{X}_\mathbf{v}, \mathbf{X}_\mathbf{t}; \mathbf{Y})$.


\subsection{Partial Information Decomposition Framework}
\label{sec:theory}
 To analyze how different modalities in the user history contribute to the prediction of $\mathbf{Y}$, we use the Partial Information Decomposition (PID) framework \cite{kolchinsky2022novel,dufumieralign} to decompose the multivariate mutual information $\mathrm{I}(\mathbf{X}_\mathbf{v}, \mathbf{X}_\mathbf{t}; \mathbf{Y})$ into three distinct types:

(1) \textbf{Uniqueness} ($\mathbf{U}_\mathbf{v}, \mathbf{U}_\mathbf{t}$): This represents information provided exclusively by one modality stream across the history. For instance, $\mathbf{U}_\mathbf{t}$ captures purely textual patterns in user behavior that help predict the next item without needing visual cues.
(2) \textbf{Redundancy} ($\mathbf{R}$): This refers to the task-relevant information shared between the visual and textual histories, such as the consistent category or brand preferences reflected in both images and descriptions.
(3) \textbf{Synergy} ($\mathbf{S}$): This is the emergent information that appears only when $\mathbf{X}_\mathbf{v}$ and $\mathbf{X}_\mathbf{t}$ are integrated. Synergistic semantics \cite{zhang2026prism} represent complex user preferences, such as a specific aesthetic style combined with a functional price point, which cannot be decoded from either modality sequence in isolation.
The total information provided by the multimodal sequence about the target $\mathbf{\mathbf{Y}}$ is expressed as:
\begin{equation}
	\mathrm{I}(\mathbf{X}_\mathbf{v}, \mathbf{X}_\mathbf{t}; \mathbf{Y}) = \mathbf{R} + \mathbf{S} + \mathbf{U}_\mathbf{v} + \mathbf{U}_\mathbf{t}.
\end{equation}
Existing GR models \cite{zhai2025multimodal, zhang2025multi} typically optimize objectives that emphasize either individual modality streams or their shared semantic components. Specifically, same-modality generation aim to maximize the mutual information between each modality and the target:
\begin{equation}
	\mathrm{I}(\mathbf{X}_\mathbf{v}; \mathbf{Y}) = \mathbf{R} + \mathbf{U}_\mathbf{v},  \quad
	\mathrm{I}(\mathbf{X}_\mathbf{t}; \mathbf{Y}) = \mathbf{R} + \mathbf{U}_\mathbf{t},
\end{equation}
where $\mathbf{R}$ denotes redundant information shared across modalities, and $\mathbf{U}_\mathbf{v}$ and $\mathbf{U}_\mathbf{t}$ represent modality-specific unique information. In parallel, cross-modality alignment and generation strategies seek to increase consistency between modalities by maximizing
\begin{equation}
	\mathrm{I}(\mathbf{X}_\mathbf{v}; \mathbf{X}_\mathbf{t}) = \mathbf{R},
\end{equation}
thereby primarily reinforcing redundant semantics.
However, under the PID framework, the total information captured by these objectives remains smaller than the joint mutual information available in the multimodal history:
\begin{equation}
	\mathrm{I}(\mathbf{X}_\mathbf{v}; \mathbf{Y}) + \mathrm{I}(\mathbf{X}_\mathbf{t}; \mathbf{Y}) - \mathrm{I}(\mathbf{X}_\mathbf{v}; \mathbf{X}_\mathbf{t})
	\;<\;
	\mathrm{I}(\mathbf{X}_\mathbf{v}, \mathbf{X}_\mathbf{t}; \mathbf{Y}).
\end{equation}
 This inequality reveals a fundamental limitation of existing paradigms: while redundancy and uniqueness information are effectively captured, the synergistic component  $\mathbf{S}$ is entirely absent. Thus, existing objectives do not explicitly target synergistic semantics and may under-utilize information that arises only through deep cross-modal integration.
 
 
 \subsection{Isolate Synergistic Information}
 \label{SI}
 
 To bridge the gap identified by PID, we propose a theoretical framework to isolate the synergistic component $\mathbf{S}$ by constructing a synergy-preserving representation. We begin by defining a transformation family $\Phi $ that acts on the multimodal input $(\mathbf{X}_{\mathbf{v}}, \mathbf{X}_{\mathbf{t}})$. The goal is to derive a synergistic view $\widehat{\mathbf{X}}$ that discards modality-specific shortcuts while preserving joint predictive power \cite{weninfmasking}.
 
 \textbf{Definition 1 (Synergy-oriented Transformation).} 
 A transformation $\phi$ is said to be synergy-oriented if the resulting view
 $\widehat{\mathbf{X}} = \phi (\mathbf{X}_{\mathbf{v}}, \mathbf{X}_{\mathbf{t}})$
 satisfies the following two properties:
 
 \textbf{(1) Joint Sufficiency:} 
 	$\mathrm{I}(\widehat{\mathbf{X}};\mathbf{Y}) \approx \mathrm{I}(\mathbf{X}_{\mathbf{v}}, \mathbf{X}_{\mathbf{t}}; \mathbf{Y})$, 
 	ensuring that the transformation does not substantially discard task-relevant predictive information for the target $\mathbf{Y}$.
 	
 	\textbf{(2) Unimodal Obscuration:} 
 	The transformation attenuates reliance on individual streams by enforcing:
 	\begin{equation}
 		\mathrm{I}(\widehat{\mathbf{X}}; \mathbf{X}_{\mathbf{v}}) \leq \epsilon \cdot \mathrm{I}(\mathbf{X}_{\mathbf{v}}; \mathbf{Y}), 
 		\quad 
 		\mathrm{I}(\widehat{\mathbf{X}}; \mathbf{X}_{\mathbf{t}}) \leq \epsilon \cdot \mathrm{I}(\mathbf{X}_{\mathbf{t}}; \mathbf{Y}),
 	\end{equation}
 	where $\epsilon > 0$ is a sufficiently small threshold. This condition bounds the information leakage from any single modality relative to task-relevant signals.

 In practice, these conditions can be approximated through full-history encoding and shortcut-suppression mechanisms such as modality masking and contrastive regularization.
 Under these principles, the model is encouraged to reconstruct $\mathbf{Y}$ by leveraging latent cross-modal interactions rather than unimodal cues.
 Accordingly, we consider the following Markov chain~\cite{brooks1998markov} to characterize the information flow in the encoding process:
 \begin{equation}
 	(\mathbf{X}_{\mathbf{v}}, \mathbf{X}_{\mathbf{t}}) 
 	\xrightarrow{\phi } 
 	\widehat{\mathbf{X}} 
 	\xrightarrow{\text{Transformer}} 
 	\mathbf{Z}_{\text{syn}} 
 	\xrightarrow{\text{Predictor}} 
 	\mathbf{Y}.
 \end{equation}
 
 \begin{lemma}
 	\label{lemma:synergy-isolation}
 	Let $\mathbf{Z}_{\text{syn}}$ be a representation learned from the synergistic view $\widehat{\mathbf{X}}$. 
 	If $\mathbf{Z}_{\text{syn}}$ is optimized to remain highly predictive of $\mathbf{Y}$ while substantially suppressing information recoverable from individual modalities, 
 	then $\mathbf{Z}_{\text{syn}}$ is \emph{dominated by} the synergistic component $\mathbf{S}$.
 \end{lemma}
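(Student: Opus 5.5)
We make the informal hypotheses quantitative and then argue through the PID identities of Section~\ref{sec:theory}, applied with $\mathbf{Z}_{\text{syn}}$ in place of the joint input. Write the decomposition of the information $\mathbf{Z}_{\text{syn}}$ carries about $\mathbf{Y}$ relative to the two sources as
\begin{equation*}
\mathrm{I}(\mathbf{Z}_{\text{syn}};\mathbf{Y}) \approx \mathbf{R}^{Z} + \mathbf{U}_\mathbf{v}^{Z} + \mathbf{U}_\mathbf{t}^{Z} + \mathbf{S}^{Z},
\end{equation*}
where each term is the part of the corresponding component of $\mathrm{I}(\mathbf{X}_\mathbf{v},\mathbf{X}_\mathbf{t};\mathbf{Y})$ retained in $\mathbf{Z}_{\text{syn}}$, so $\mathbf{R}^{Z}\le\mathbf{R}$ and similarly for the others. ``Highly predictive'' will mean $\mathrm{I}(\mathbf{Z}_{\text{syn}};\mathbf{Y}) \ge \mathrm{I}(\mathbf{X}_\mathbf{v},\mathbf{X}_\mathbf{t};\mathbf{Y}) - \delta$. ``Substantially suppressing unimodal information'' will mean the unimodal-obscuration bounds of Definition~1.

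The first step transports Definition~1 from $\widehat{\mathbf{X}}$ to $\mathbf{Z}_{\text{syn}}$. The encoding is a Markov chain $(\mathbf{X}_\mathbf{v},\mathbf{X}_\mathbf{t})\to\widehat{\mathbf{X}}\to\mathbf{Z}_{\text{syn}}$, so the data processing inequality gives
\begin{equation*}
\mathrm{I}(\mathbf{Z}_{\text{syn}};\mathbf{X}_m)\le \mathrm{I}(\widehat{\mathbf{X}};\mathbf{X}_m)\le \epsilon\,\mathrm{I}(\mathbf{X}_m;\mathbf{Y}), \qquad m\in\{\mathbf{v},\mathbf{t}\}.
\end{equation*}

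The second step bounds the non-synergistic components by the leakage to each modality. We will argue that $\mathbf{R}^{Z}+\mathbf{U}_m^{Z}\le \mathrm{I}(\mathbf{Z}_{\text{syn}};\mathbf{X}_m)$. The reason is that the redundant and $m$-unique parts are, by construction of the PID, information about $\mathbf{Y}$ recoverable from $\mathbf{X}_m$ alone. Any such information present in $\mathbf{Z}_{\text{syn}}$ is therefore information shared between $\mathbf{Z}_{\text{syn}}$ and $\mathbf{X}_m$. Formally we would use
\begin{equation*}
\mathrm{I}(\mathbf{Z}_{\text{syn}};\mathbf{Y}) - \mathrm{I}(\mathbf{Z}_{\text{syn}};\mathbf{Y}\mid \mathbf{X}_m) \le \mathrm{I}(\mathbf{Z}_{\text{syn}};\mathbf{X}_m).
\end{equation*}
This inequality holds because the left-hand side is the co-information $\mathrm{I}(\mathbf{Z}_{\text{syn}};\mathbf{Y};\mathbf{X}_m)$. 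It is at most $\mathrm{I}(\mathbf{Z}_{\text{syn}};\mathbf{X}_m)$ since $\mathrm{I}(\mathbf{Z}_{\text{syn}};\mathbf{X}_m\mid\mathbf{Y})\ge 0$. We then identify the left-hand side with $\mathbf{R}^{Z}+\mathbf{U}_m^{Z}$. Summing over $m$ gives
\begin{equation*}
2\mathbf{R}^{Z}+\mathbf{U}_\mathbf{v}^{Z}+\mathbf{U}_\mathbf{t}^{Z}\le \epsilon\,\bigl(\mathrm{I}(\mathbf{X}_\mathbf{v};\mathbf{Y})+\mathrm{I}(\mathbf{X}_\mathbf{t};\mathbf{Y})\bigr).
\end{equation*}

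The third step combines the pieces. Since
\begin{equation*}
\mathbf{S}^{Z} = \mathrm{I}(\mathbf{Z}_{\text{syn}};\mathbf{Y}) - (\mathbf{R}^{Z}+\mathbf{U}_\mathbf{v}^{Z}+\mathbf{U}_\mathbf{t}^{Z}),
\end{equation*}
the two bounds give
\begin{equation*}
\frac{\mathbf{S}^{Z}}{\mathrm{I}(\mathbf{Z}_{\text{syn}};\mathbf{Y})} \ge 1 - \frac{\epsilon\,(2\mathbf{R}+\mathbf{U}_\mathbf{v}+\mathbf{U}_\mathbf{t})}{\mathrm{I}(\mathbf{X}_\mathbf{v},\mathbf{X}_\mathbf{t};\mathbf{Y})-\delta}.
\end{equation*}
This ratio tends to $1$ as $\epsilon,\delta\to 0$, which is the precise sense of ``dominated by $\mathbf{S}$''. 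As a consistency check, joint sufficiency forces $\mathbf{S}^{Z}\ge \mathrm{I}(\mathbf{X}_\mathbf{v},\mathbf{X}_\mathbf{t};\mathbf{Y})-\delta-O(\epsilon)$. So $\mathbf{Z}_{\text{syn}}$ must retain essentially all of $\mathbf{S}$, and the remaining non-synergistic mass is shown to be small. This requires $\mathbf{S}$ to be comparable to the total; otherwise sufficiency and obscuration are incompatible. We state this as the implicit nondegeneracy assumption that the hypothesis can be met.

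The main obstacle is the second step: identifying $\mathbf{R}^{Z}+\mathbf{U}_m^{Z}$ with the co-information term. PID is not canonically defined, so this identification depends on the redundancy measure. We will fix the measure of \cite{kolchinsky2022novel} used earlier, under which redundancy and uniqueness with respect to $\mathbf{X}_m$ are bounded by what a channel through $\mathbf{X}_m$ can convey. If that fails, the fallback is to state the lemma in the weaker, measure-free form: the non-synergistic part is bounded by the unimodal leakage $\mathrm{I}(\mathbf{Z}_{\text{syn}};\mathbf{X}_\mathbf{v})+\mathrm{I}(\mathbf{Z}_{\text{syn}};\mathbf{X}_\mathbf{t})$, taken as the operational definition of ``information recoverable from individual modalities''.
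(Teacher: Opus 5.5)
\textbf{The gap is in your second step, which is load-bearing.} The quantity $\mathrm{I}(\mathbf{Z}_{\text{syn}};\mathbf{Y})-\mathrm{I}(\mathbf{Z}_{\text{syn}};\mathbf{Y}\mid\mathbf{X}_m)$ is the co-information. In any PID of $\mathbf{Y}$ with sources $(\mathbf{Z}_{\text{syn}},\mathbf{X}_m)$ it equals the redundancy \emph{minus} the synergy of that pair. It can therefore be negative, so it cannot be identified with the non-negative ``retained'' atoms $\mathbf{R}^{Z}+\mathbf{U}_m^{Z}$ (which no standard PID defines in any case).

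Falling back on Kolchinsky's Blackwell redundancy does not help either, because redundancy between two sources is not bounded by their mutual information. Take $\mathbf{X}_\mathbf{v},\mathbf{X}_\mathbf{t}$ i.i.d.\ uniform bits, $\mathbf{Z}_{\text{syn}}=\mathbf{X}_\mathbf{v}\oplus\mathbf{X}_\mathbf{t}$, and $\mathbf{Y}=\mathbf{X}_\mathbf{v}\wedge\neg\mathbf{X}_\mathbf{t}=\mathbf{X}_\mathbf{v}\wedge\mathbf{Z}_{\text{syn}}$. Then $\mathrm{I}(\mathbf{Z}_{\text{syn}};\mathbf{X}_\mathbf{v})=0$, and the co-information is $0.311-0.5\approx-0.19$ bits. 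Yet $(\mathbf{X}_\mathbf{v},\mathbf{Z}_{\text{syn}})\mapsto\mathbf{Y}$ is an AND gate with independent inputs, whose Blackwell (and BROJA) redundancy is about $0.311$ bits.

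This example violates your $\delta$-hypothesis, so it does not refute the lemma. It does show that $\mathbf{R}^{Z}+\mathbf{U}_m^{Z}\le\mathrm{I}(\mathbf{Z}_{\text{syn}};\mathbf{X}_m)$ is not a general fact you can invoke in step three. The paper's own proof rests on the same intuition, stated only verbally: the obscuration constraints ``limit'' how much of $\mathbf{R},\mathbf{U}_\mathbf{v},\mathbf{U}_\mathbf{t}$ the representation can encode. So it offers no justification you could borrow.

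\textbf{The step is also unnecessary.} Your own quantified hypotheses already force $\mathbf{R},\mathbf{U}_\mathbf{v},\mathbf{U}_\mathbf{t}$ to be small, so the ``nondegeneracy assumption'' you mention is a consequence, not an assumption. Since $\mathbf{Z}_{\text{syn}}$ is computed from the inputs, $\mathbf{Y}-(\mathbf{X}_\mathbf{v},\mathbf{X}_\mathbf{t})-\mathbf{Z}_{\text{syn}}$ is Markov. Hence $\mathrm{I}(\mathbf{X}_\mathbf{v},\mathbf{X}_\mathbf{t};\mathbf{Y}\mid\mathbf{Z}_{\text{syn}})=\mathrm{I}(\mathbf{X}_\mathbf{v},\mathbf{X}_\mathbf{t};\mathbf{Y})-\mathrm{I}(\mathbf{Z}_{\text{syn}};\mathbf{Y})\le\delta$.

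For each $m$, using your first step, $\mathrm{I}(\mathbf{X}_m;\mathbf{Y})\le\mathrm{I}(\mathbf{X}_m;\mathbf{Y},\mathbf{Z}_{\text{syn}})=\mathrm{I}(\mathbf{X}_m;\mathbf{Z}_{\text{syn}})+\mathrm{I}(\mathbf{X}_m;\mathbf{Y}\mid\mathbf{Z}_{\text{syn}})\le\epsilon\,\mathrm{I}(\mathbf{X}_m;\mathbf{Y})+\delta$. Thus $\mathbf{R}+\mathbf{U}_m\le\delta/(1-\epsilon)$ for $\epsilon<1$. It follows that $\mathbf{R}+\mathbf{U}_\mathbf{v}+\mathbf{U}_\mathbf{t}\le 2\delta/(1-\epsilon)$, and hence $\mathbf{S}-\delta\le\mathrm{I}(\mathbf{Z}_{\text{syn}};\mathbf{Y})\le\mathbf{S}+2\delta/(1-\epsilon)$.

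This is exactly the paper's conclusion $\mathrm{I}(\mathbf{Z}_{\text{syn}};\mathbf{Y})\approx\mathbf{S}$. It holds for any PID with non-negative atoms obeying the consistency equations, and needs no notion of retained atoms. It also makes explicit what the lemma really says: its hypotheses can only be met when the target information is almost purely synergistic to begin with.
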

\textbf{Proof.}
 	The complete proof is provided in Appendix~\ref{proof}. $\square$
 	
 \textbf{How Could Theory Further Drive Practice?}
These analyses motivate a synergy-oriented optimization objective:
\begin{equation}
	\max_{} \mathrm{I}(\mathbf{Z}_{\text{syn}}; \mathbf{Y}) \quad \text{s.t.} \quad \min_{} \sum_{m \in \{\mathbf{v}, \mathbf{t}\}} \mathrm{I}(\mathbf{Z}_{\text{syn}}; \mathbf{Z}_{m}),
	\label{eq:final_objective}
\end{equation}
where $\mathrm{I}(\mathbf{Z}_{\text{syn}}; \mathbf{Y})$ represents the predictive power derived from synergistic signals, and $\mathbf{Z}_{m}$ denotes the unimodal representations for $m \in \{\mathbf{v}, \mathbf{t}\}$. 
 By maximizing mutual information with the target $\mathbf{Y}$ while suppressing information from individual modalities, the objective encourages the model to capture the missing synergistic signals.

\section{ Framework Overview }
Motivated by the theoretical analysis in Section \ref{SI}, we propose \textbf{SynGR}, illustrated in Figure~\ref{model}, a framework designed to explicitly capture cross-modal semantics.

\subsection{Multimodal Tokenization via RQ-VAE}
To bridge the gap between continuous multimodal representations and the discrete input space required by generative models, we adopt a Residual-Quantized Variational AutoEncoder (RQ-VAE) \cite{lee2022autoregressive} to discretize item content. For each item $i$, continuous embeddings from different modalities are first extracted using frozen encoders, such as LLaMA for text \cite{touvron2023llama} and ViT for vision \cite{yuan2021tokens}. These embeddings are then projected into modality-specific latent vectors $\mathbf{z}_m$, where $m \in {\mathbf{t}, \mathbf{v}}$.

To achieve high-fidelity discretization, each latent vector $\mathbf{z}_m$ is encoded through a $D$-level residual quantization process. Specifically, let  $\mathcal{C}^d = \{\mathbf{e}_k^d\}_{k=1}^K$ denotes the codebook at quantization depth $d \in {1, \dots, D}$, where $K$ is the codebook size. At each depth, the residual is quantized by selecting the nearest codeword, and the residual is updated accordingly:
\begin{equation}
	\begin{split}
		c_{m,d} &= \operatorname*{arg\,min}_{k} \| \mathbf{r}_{m,d-1} - \mathbf{e}_k^{d} \|_2^2, \\
		\mathbf{r}_{m,d} &= \mathbf{r}_{m,d-1} - \mathbf{e}_{c_{m,d}}^{d},
	\end{split}
\end{equation}
where $\mathbf{r}_{m,0} = \mathbf{z}_m$. This recursive formulation allows the model to progressively capture fine-grained semantic information while maintaining a compact discrete representation.
To enable unified sequence modeling while preserving modality identity, we construct a shared vocabulary $\mathcal{V} = \mathcal{V}_{\mathbf{t}} \cup \mathcal{V}_{\mathbf{v}}$ with modality-specific and depth-aware prefixes following \cite{zhai2025multimodal,guo2025quantized}. Concretely, textual tokens are assigned lowercase prefixes, whereas visual tokens use uppercase counterparts, ensuring that the generative backbone can distinguish modalities within a single token stream. Assuming a quantization depth of $D=3$, the resulting discrete identifier for item $i$ is given by
$
	\mathbf{x}_i = [a_1, b_2, c_3, A_1, B_2, C_3].
$
This unified sequence allows model to process multimodal content as a single, coherent stream for modeling cross-modal interactions.
It should be noted that, as our primary objective is to enhance cross-modal synergy during the generative stage, we omit the training specifics of the tokenization process.

\begin{table*}[t]
	\renewcommand{\arraystretch}{1.2}
	\caption{Overall performance comparison across benchmark datasets. Best and second-best results are indicated in \textbf{bold} and \underline{underlined} fonts respectively. Red font denotes the relative improvement of SynGR over the strongest baseline.}
	\label{tab:main_results}
	\resizebox{\textwidth}{!}{
		 \LARGE
		\begin{tabular}{l|ccccc|ccccc|ccccc}
			\hline
			\multirow{2}{*}{\textbf{Model~/~Datasets}} & \multicolumn{5}{c|}{\textbf{Arts}} & \multicolumn{5}{c|}{\textbf{Games}} & \multicolumn{5}{c}{\textbf{Instruments}} \\
			\cmidrule{2-16}
			& HR@1 & HR@5 & HR@10 & N@5 & N@10 & HR@1 & HR@5 & HR@10 & N@5 & N@10 & HR@1 & HR@5 & HR@10 & N@5 & N@10 \\
			\hline 	\hline
			GRU4Rec \textcolor{brown}{(ICLR'16)} & 0.0365 & 0.0817 & 0.1088 & 0.0602 & 0.0690 & 0.0140 & 0.0544 & 0.0895 & 0.0341 & 0.0453 & 0.0566 & 0.0975 & 0.1207 & 0.0783 & 0.0857 \\
			SASRec \textcolor{brown}{(ICDM'18)}  & 0.0212 & 0.0951 & 0.1250 & 0.0610 & 0.0706 & 0.0069 & 0.0587 & 0.0985 & 0.0333 & 0.0461 & 0.0318 & 0.0946 & 0.1233 & 0.0654 & 0.0746 \\
			BERT4Rec \textcolor{brown}{(CIKM'19)} & 0.0289 & 0.0697 & 0.0922 & 0.0502 & 0.0575 & 0.0115 & 0.0426 & 0.0725 & 0.0270 & 0.0366 & 0.0450 & 0.0856 & 0.1081 & 0.0667 & 0.0739 \\
			FDSA \textcolor{brown}{(IJCAI'19)} & 0.0380 & 0.0832 & 0.1190 & 0.0583 & 0.0695 & 0.0163 & 0.0614 & 0.0988 & 0.0389 & 0.0509 & 0.0530 & 0.0987 & 0.1249 & 0.0775 & 0.0859 \\
			S$^3$-Rec \textcolor{brown}{(CIKM'20)}  & 0.0172 & 0.0739 & 0.1030 & 0.0511 & 0.0630 & 0.0136 & 0.0527 & 0.0903 & 0.0351 & 0.0468 & 0.0339 & 0.0937 & 0.1123 & 0.0693 & 0.0743 \\
			P5-CID \textcolor{brown}{(RecSys'22)} & 0.0421 & 0.0713 & 0.0994 & 0.0607 & 0.0662 & 0.0169 & 0.0532 & 0.0824 & 0.0331 & 0.0454 & 0.0512 & 0.0839 & 0.1119 & 0.0678 & 0.0704 \\
			VQ-Rec \textcolor{brown}{(WWW'23)} & 0.0408 & 0.1038 & \underline{0.1386} & 0.0732 & 0.0844 & 0.0075 & 0.0408 & 0.0679 & 0.0242 & 0.0329 & 0.0502 & 0.1062 & 0.1357 & 0.0796 & 0.0891 \\
			MISSRec \textcolor{brown}{(MM'23)} & 0.0479 & 0.1021 & 0.1321 & 0.0699 & 0.0815 & 0.0201 & \underline{0.0674} & 0.1048 & 0.0385 & 0.0499 & 0.0723 & 0.1089 & 0.1361 & 0.0797 & 0.0880 \\
			VIP5 \textcolor{brown}{(EMNLP'23)} & 0.0474 & 0.0704 & 0.0859 & 0.0586 & 0.0635 & 0.0173 & 0.0480 & 0.0758 & 0.0328 & 0.0418 & 0.0737 & 0.0892 & 0.1071 & 0.0815 & 0.0872 \\
			TIGER \textcolor{brown}{(NeurIPS'23)} & 0.0532 & 0.0894 & 0.1167 & 0.0718 & 0.0806 & 0.0166 & 0.0523 & 0.0857 & 0.0345 & 0.0453 & 0.0754 & 0.1007 & 0.1221 & 0.0882 & 0.0950 \\
			MQL4GRec \textcolor{brown}{(ICLR'25)} & 0.0672 & 0.1037 & 0.1327 & 0.0857 & 0.0950 & 0.0203 & 0.0637 & 0.1033 & 0.0421 & 0.0548 & \underline{0.0833} & \underline{0.1115} & \underline{0.1375} & \underline{0.0977} & \underline{0.1060} \\
			MACRec \textcolor{brown}{(AAAI'26)} & \underline{0.0685} & \underline{0.1046} & 0.1329 & \underline{0.0868} & \underline{0.0953} & \underline{0.0208} & 0.0671 & \underline{0.1078} & \underline{0.0435} & \underline{0.0565} & 0.0819 & 0.1110 & 0.1363 & 0.0965 & 0.1046 \\
			\hdashline
			\textbf{SynGR(Ours)} & \textbf{0.0713} & \textbf{0.1145} & \textbf{0.1449} & \textbf{0.0919} & \textbf{0.1010} & \textbf{0.0245} & \textbf{0.0702} & \textbf{0.1092} & \textbf{0.0471} & \textbf{0.0596} & \textbf{0.0850} & \textbf{0.1321} & \textbf{0.1768} & \textbf{0.1045} & \textbf{0.1189} \\
			Improv. (\%) & \textbf{\textcolor{jujube}{+4.09\%}} & \textbf{\textcolor{jujube}{+9.46\%}} & \textbf{\textcolor{jujube}{+4.55\%}} & \textbf{\textcolor{jujube}{+5.88\%}} & \textbf{\textcolor{jujube}{+5.98\%}} & \textbf{\textcolor{jujube}{+17.79\%}} & \textbf{\textcolor{jujube}{+4.15\%}} & \textbf{\textcolor{jujube}{+1.30\%}} & \textbf{\textcolor{jujube}{+8.28\%}} & \textbf{\textcolor{jujube}{+5.49\%}} & \textbf{\textcolor{jujube}{+2.04\%}} & \textbf{\textcolor{jujube}{+18.48\%}} & \textbf{\textcolor{jujube}{+28.58\%}} & \textbf{\textcolor{jujube}{+6.96\%}} & \textbf{\textcolor{jujube}{+12.17\%}} \\
			\hline
	\end{tabular}}
	\label{results}
\end{table*}
\subsection{Unleashing Synergistic Information}
\label{sec:mining_synergy}
Building upon the generative identifiers, this section describes how the proposed synergy isolation framework is instantiated in practice. To realize the synergy-preserving transformation $\phi$, we construct synergy-preserving views through a saliency-aware masking mechanism that selectively suppresses shortcut-dominant features. Furthermore, we introduce a synergistic contrastive learning objective that explicitly encourages the preservation of the synergistic component $\mathbf{S}$ within the generative latent space.

\subsubsection{Saliency-Aware Masking Mechanism}
To satisfy the Unimodal Obscuration property of $\phi$ defined in Definition 1, an effective strategy is to selectively mask tokens belonging to the dominant modality. By attenuating these highly predictive yet redundant unimodal shortcuts, the model is prevented from relying solely on individual stream cues and is instead compelled to reconstruct item semantics through cross-modal interactions, thereby facilitating the isolation of the synergistic component $\mathbf{S}$. 

Formally, let $\mathcal{H}=\{\mathbf{x}_1,\dots,\mathbf{x}_N\}$ be the multimodal input sequence, where $N$ denotes total token count. To diagnose modality-specific shortcuts, we extract the self-attention weights from the final encoder layer, denoted as $\{\mathbf{A}^{(m)} \in \mathbb{R}^{N \times N}\}_{m=1}^{M}$ for $M$ attention heads. We then compute a global saliency score \cite{abnar2020quantifying} $\bm{\ell}_i$ for each token by aggregating its total received attention mass:
\begin{equation}
	\bm{\ell}_i = \frac{1}{M \cdot N}\sum_{m=1}^{M}\sum_{j=1}^{N}\mathbf{A}^{(m)}_{j,i}.
	\label{eq:saliency}
\end{equation}
The denominator $M \cdot N$ serves as a normalization factor across all heads and query positions, providing a robust, model-internal estimate of token $i$'s relative influence on the final contextual representation.

Building upon these scores, we define the modality-level saliency densities for text ($\bar{\bm{\ell}}_{\mathbf{t}}$) and vision ($\bar{\bm{\ell}}_{\mathbf{v}}$) as follows:
\begin{equation}
	\bar{\bm{\ell}}_{\mathbf{t}} = \frac{1}{|\mathbf{X}_{\mathbf{t}}|} \sum_{i \in \mathbf{X}_{\mathbf{t}}} \bm{\ell}_i,   \quad
	\bar{\bm{\ell}}_{\mathbf{v}} = \frac{1}{|\mathbf{X}_{\mathbf{v}}|} \sum_{i \in \mathbf{X}_{\mathbf{v}}} \bm{\ell}_i.
\end{equation}
 The sets $\mathbf{X}_{\mathbf{t}}$ and $\mathbf{X}_{\mathbf{v}}$ contain the indices of tokens belonging to the textual and visual modalities, respectively. The modality exhibiting the higher average saliency is identified as the dominant modality, $\mathcal{M}_{\mathrm{dom}} = \arg\max\{\bar{\bm{\ell}}_{\mathbf{t}}, \bar{\bm{\ell}}_{\mathbf{v}}\}$. 
Finally, we mask the top-$\text{r}$ salient tokens within the identified dominant modality $\mathcal{M}_{\mathrm{dom}}$:
\begin{equation}
	\tilde{\mathbf{x}}_i =
	\begin{cases}
		\texttt{[MASK]}, & \text{if } i \in \text{top-}\text{r}(\mathcal{M}_{\mathrm{dom}}), \\
		\mathbf{x}_i, & \text{otherwise}.
	\end{cases}
	\label{con}
\end{equation}
This structured intervention forces the generator to bridge the information gap through synergistic reasoning rather than relying on isolated unimodal cues.

\subsubsection{Synergistic-Aware Contrastive learning}
\label{sec:dscl}  
To implement  Eq.~\eqref{eq:final_objective}, we propose a synergistic contrastive learning scheme that explicitly promotes cross-modal interaction while suppressing unimodal shortcuts. 
Specifically, for each interaction history, we construct a triplet of semantic views to isolate the synergistic component $\mathbf{S}$:
\begin{itemize}[leftmargin=*,labelindent=0pt,topsep=0pt,itemsep=0pt]
	\item \textbf{Anchor View} ($\mathcal{H}_{\text{mask}}$): The masking sequence where dominant modality shortcuts are selectively masked.
	\item \textbf{Positive View} ($\mathcal{H}_{\text{ori}}$): The original multimodal sequence containing holistic item semantics.
    \item \textbf{Negative Views} ($\mathcal{H}_{\text{uni}}$): A unimodal sequence set.
\end{itemize}

To transform the token-wise hidden states into a global sequence-level representation, we apply a mask-aware pooling operation to the decoder's final output \cite{reimers2019sentence}. Let $\mathbf{h}^{\text{dec}}_t \in \mathbb{R}^d$ denote the hidden vector at position $t$ within the sequence of length $N$. The pooled representation $\mathbf{Z}$ is formulated as:
\begin{equation}
	\mathbf{Z} = \frac{\sum_{t=1}^{N} \mathbb{I}(y_t \neq \text{[PAD]}) \cdot \mathbf{h}^{\text{dec}}_t}{\sum_{t=1}^{N} \mathbb{I}(y_t \neq \text{[PAD]}) + \xi},
	\label{eq:pooling}
\end{equation}
where $t \in [1, N]$ denotes the token position, $y_t$ is the token identifier at position $t$, and [PAD] is the padding token used for batch alignment. 
The indicator function $\mathbb{I}(\cdot)$ excludes padding positions from the aggregation, and $\xi$ is a small constant (set to $1\mathrm{e}{-7}$) introduced for numerical stability.


By applying this pooling to the hidden states of three input sequences respectively, we obtain the holistic representation $\mathbf{Z}_{\text{ori}}$, the masking anchor $\mathbf{Z}_{\text{mask}}$, and the unimodal shortcut representations $\mathbf{Z}_{\text{uni}}$, respectively.
The learning objective is operationalized via a tripartite contrastive loss \cite{wang2021understanding} that regularizes the latent space, incentivizing the decoder to recover holistic semantics from the anchor while distancing it from unimodal shortcut distributions, that is: 
\begin{equation}
	\mathcal{L}_{\text{Syn}} = -\log \frac{e^{\operatorname{sim}(\mathbf{Z}_\text{mask}, \mathbf{Z}_{ori}) / \tau}}{e^{\operatorname{sim}(\mathbf{Z}_\text{mask}, \mathbf{Z}_\text{ori}) / \tau} + e^{\operatorname{sim}(\mathbf{Z}_\text{mask}, \mathbf{Z}_\text{uni}) / \tau}},
	\label{eq:scl}
\end{equation}
where $\tau$ is a temperature parameter, $\operatorname{sim}(\cdot,\cdot)$ denotes the cosine similarity in the latent space.

\textbf{Remark.} This objective $\mathcal{L}_{\text{Syn}}$ is computed symmetrically for both the textual and visual sub-tasks. Specifically, for the textual sub-task, the target item is represented by sequences of textual tokens, and $\mathbf{Z}_{\text{uni}}$ is derived from the unimodal textual input to serve as the shortcut representation. An identical procedure is executed for visual modality. 

\subsection{Model Optimization} 
SynGR is optimized via a unified objective coupling autoregressive generation with synergistic regularization. Specifically, we minimize the negative log-likelihood (NLL) \cite{lastras2019information} for next-token prediction across the tripartite views $\{\mathcal{H}_\text{ori}, \mathcal{H}_\text{mask}, \mathcal{H}_\text{uni}\}$ (Sec.~\ref{sec:dscl}):
\begin{equation}
	\mathcal{L}_{\text{Gen}} = \sum_{\mathcal{H} \in \{\mathcal{H}_\text{ori}, \mathcal{H}_\text{mask}, \mathcal{H}_\text{uni}\}} \left( - \sum_{j=1}^{|\mathbf{Y}|} \log P_\theta ( y_j \mid y_{<j}, \mathcal{H} ) \right),
	\label{eq:all_view_gen}
\end{equation}
where $\theta$ denotes  model parameters, $\mathbf{Y}=\{y_1,\dots,y_{|\mathbf{Y}|}\}$ is the target identifier sequence. 
The final objective integrates this generative loss with  $\mathcal{L}_{\text{Syn}}$ to capture synergistic $\mathbf{S}$:
\begin{equation}
	\mathcal{L} = \mathcal{L}_{\text{Gen}} + \lambda \mathcal{L}_{\text{Syn}},
	\label{eq:lsyn}
\end{equation}
where $\lambda$ is a hyperparameter. Here, $\mathcal{L}_{\text{Syn}}$ serves as a regularization term that encourages the model to capture complementary cross-modal interactions beyond unimodal semantics.
Meanwhile, the multi-view generative objectives inherited from prior work \cite{zhang2025multi} help preserve redundant ($\mathbf{R}$) and modality-specific information ($\mathbf{U}$), preventing the degradation of fundamental unimodal semantics. Notably, to ensure model compactness and facilitate cross-view knowledge transfer, the Transformer backbones  that processing different views share parameters.

\textbf{Inference.} Following the multimodal generative paradigm \cite{zhai2025multimodal}, SynGR performs inference via a lightweight autoregressive decoding process. Specifically, given the user behavior history $\mathcal{H}$ represented by sequences of unimodal codebook indices, the model generates the optimal item identifier sequence $\mathbf{Y}$ through beam search. Since the synergistic regularization $\mathcal{L}_{\text{Syn}}$ is exclusively utilized during the training phase to align the representation space, SynGR maintains high efficiency with zero additional computational overhead during inference. 

\textbf{Complexity Analysis.} 
The computational complexity of SynGR is asymptotically dominated by the $\mathcal{O}(TN^2d)$ overhead of the Transformer-based backbone. The proposed synergistic components introduce only marginal computational costs: (i) the saliency analysis incurs an $\mathcal{O}(MN^2)$ overhead by repurposing pre-computed attention maps from the forward pass, and (ii) the contrastive alignment adds a negligible $\mathcal{O}(Nd)$ for pooling and similarity computations. Although multi-view training scales the constant factor of the training loss, the inference complexity remains invariant compared to the vanilla Transformer backbone. 

\begin{table}[!]
	\centering
	\caption{The statistical overview of the three datasets. The ``Avg. len.” represents the average length of item sequences.}
	\label{tab:datasets}
	\renewcommand{\arraystretch}{1.2}
	\resizebox{0.48\textwidth}{!}{
		\begin{tabular}{l|c|c|c|c|c}
			\hline
			Datasets & \#Users & \#Items & \#Interactions & Sparsity & Avg. len \\
			\hline 	\hline
			Arts & 22,171 & 9,416 & 174,079 & 99.92\% & 7.85 \\
			Games & 42,259 & 13,839 & 373,514 & 99.94\% & 8.84 \\
			Instruments & 17,112 & 6,250 & 136,226 & 99.87\% & 7.96 \\
			\hline
	\end{tabular}}
	\label{data}
\end{table}
\section{Experiment}In this section, we empirically evaluate the SynGR frame- work through experiments designed to address five key research questions: \textbf{RQ1}: How does SynGR compare against state-of-the-art methods?  \textbf{RQ2}: What are the training and inference efficiencies of SynGR?  \textbf{RQ3}: How do different modules affect the performance of SynGR?  \textbf{RQ4}: How do key parameters influence the model's performance?
\textbf{RQ5}: How can we intuitively understand the advantage of SynGR?

\subsection{Experimental Setup}
\textbf{Datasets.}  
We evaluate SynGR on three representative categories from the Amazon Product Reviews dataset \cite{ni2019justifying}: Arts, Games, and Instruments. Detailed statistics for
these datasets are provided in Table \ref{data}.


\textbf{Compared Baselines.} We compare SynGR against three categories of baselines: 
(i) Sequential Recommendation: BERT4Rec \cite{sun2019bert4rec}, SASRec \cite{kang2018self}, FDSA \cite{zhang2019feature}, S$^3$-Rec \cite{zhou2020s3}, and P5-CID \cite{geng2022recommendation, hua2023index}; 
(ii) Multimodal Sequential Recommendation: MISSRec \cite{wang2023missrec}; 
(iii) Generative Recommendation: VIP5 \cite{geng2023vip5}, TIGER \cite{rajput2023recommender}, MQL4GRec \cite{zhai2025multimodal}, and MACRec \cite{zhang2025multi}. 
Detailed baseline descriptions are provided in Appendix \ref{app:baselines}.

\textbf{Evaluation Metrics.} Following \cite{zhai2025multimodal,zhang2025multi}, we employ Recall@$K$ and NDCG@$K$ ($K \in \{1, 5, 10\}$) using a leave-one-out protocol. To ensure rigorous evaluation, we perform full ranking over entire item space. Moreover, for all generative models, we standardize the inference process by setting the beam search size to 20.

\textbf{Implementation Details.} Due to space limitations, the detailed implementation settings are provided in Appendix~\ref{id}.

\begin{figure}[t]
	\centering
	\setlength{\fboxrule}{0.pt}
	\setlength{\fboxsep}{0.pt}
	\fbox{
		\includegraphics[width=\linewidth]{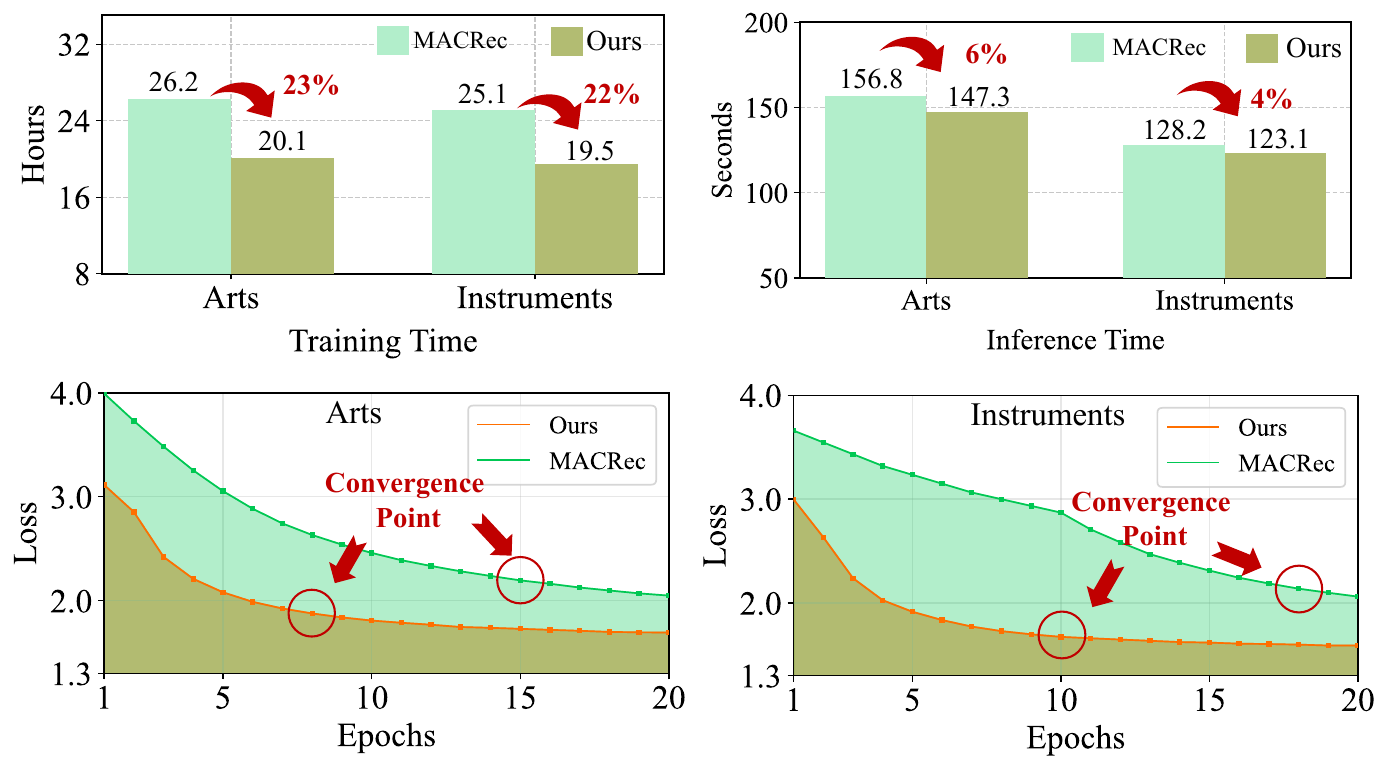}
	}
	\caption{
		Efficiency and convergence analysis of SynGR and MACRec on two datasets. All experiments are conducted on a server equipped with six NVIDIA GeForce RTX 4090 GPUs.}
	\label{eff}
	\vspace{-15pt}
\end{figure}

\subsection{Main Experimental Results (RQ1)}
To evaluate the effectiveness of SynGR, we conduct comprehensive experiments on three benchmark datasets, with results summarized in Table~\ref{results}. 

 SynGR consistently achieves the best performance across all datasets and evaluation metrics, outperforming the strongest multimodal generative baseline, MACRec. Notably, on the Instruments, SynGR surpasses the best baseline by 28.58\% in HR@10 and 12.17\% in NDCG@10, highlighting its advantage in scenarios where single-modality cues are insufficient and cross-modal reasoning is critical.
These gains are primarily attributed to the saliency-aware masking mechanism, which dynamically identifies and suppresses modality shortcuts, and the synergistic contrastive objective, which regularizes the latent space to prevent unimodal collapse during generation. The consistent improvements across diverse domains demonstrate the robustness of SynGR in capturing complex multimodal semantics for GR.

\begin{figure*}[t]
	\centering
	\includegraphics[width=0.235\textwidth]{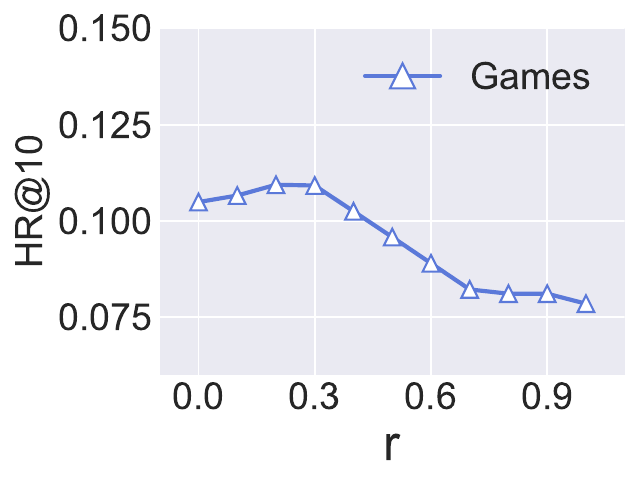}
	\hfill
	\includegraphics[width=0.235\textwidth]{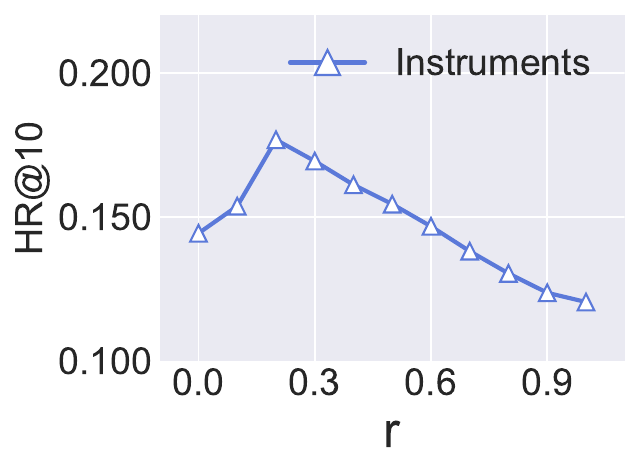}
	\hfill
	\includegraphics[width=0.235\textwidth]{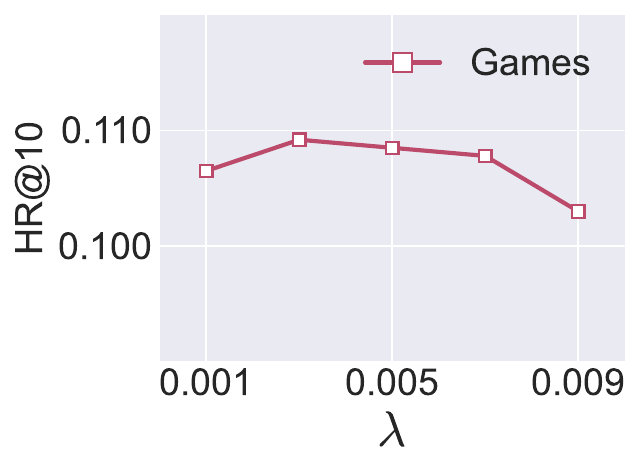}
	\hfill
	\includegraphics[width=0.235\textwidth]{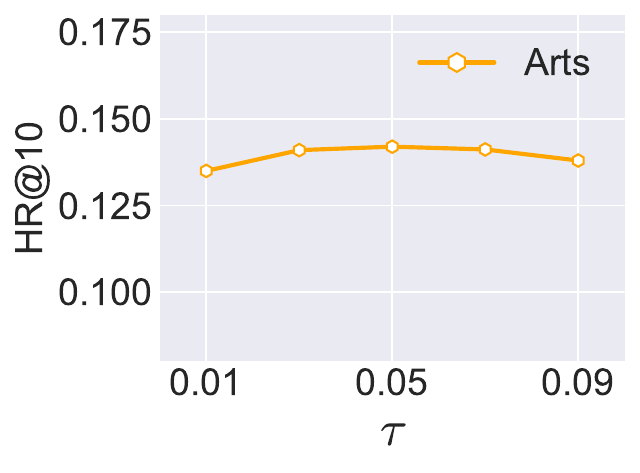}
	
	\vspace{-1ex} 
	
	\includegraphics[width=0.235\textwidth]{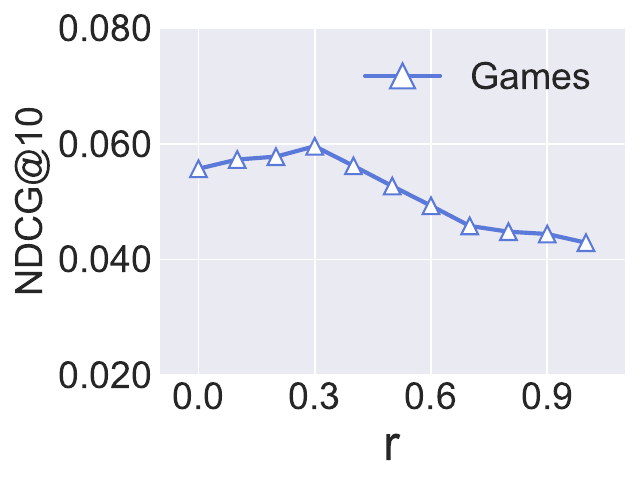}
	\hfill
	\includegraphics[width=0.235\textwidth]{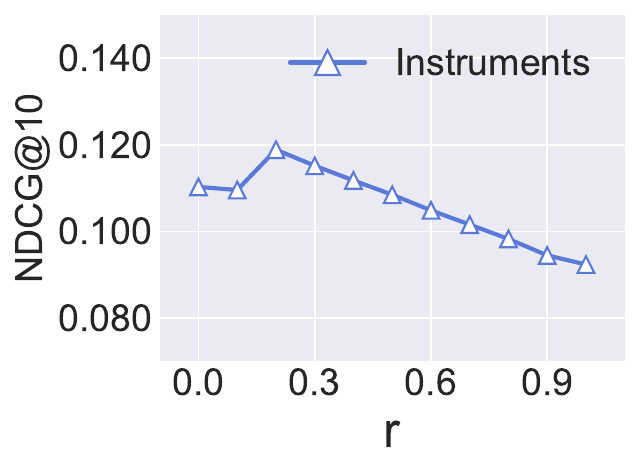}
	\hfill
	\includegraphics[width=0.235\textwidth]{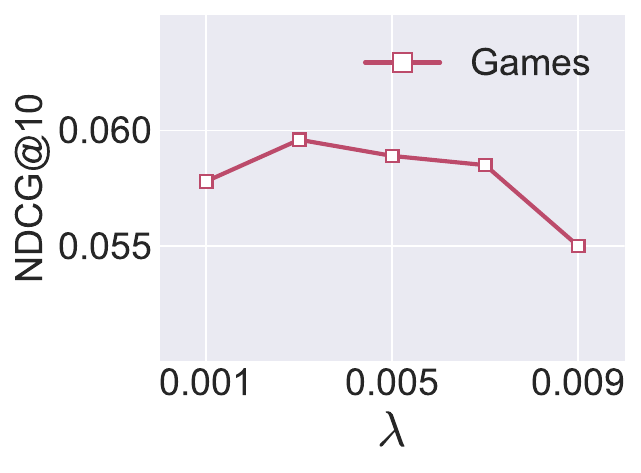}
	\hfill
	\includegraphics[width=0.235\textwidth]{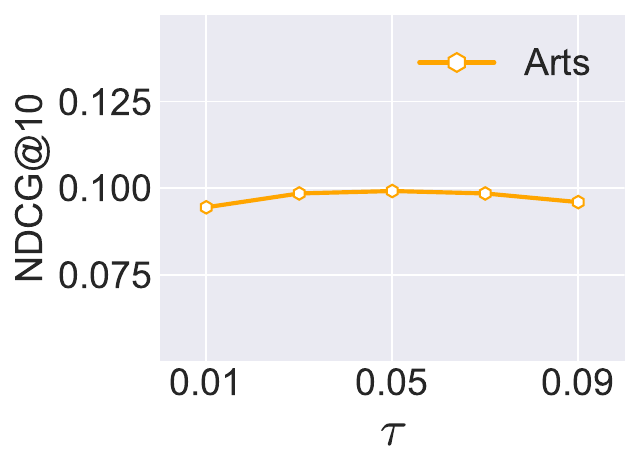}
	\caption{The performances (HR@10, NDCG@10) of our SynGR under varying parameters on different datasets.}
	\label{fig:para_main_analysis}
\end{figure*}
\subsection{Efficiency and Convergence Analysis (RQ2)}
To assess the practical utility of SynGR, we compare it with the state-of-the-art MACRec in terms of training efficiency, inference latency, and convergence behavior on two datasets. 

As shown in Figure~\ref{eff}, SynGR consistently demonstrates advantages in both computational efficiency and optimization stability. Specifically, SynGR reduces training time by approximately 23\% on Arts and 22\% on Instruments. In contrast, MACRec incurs higher training costs due to its multi-aspect alignment framework, which requires multiple explicit and implicit generation tasks with separate cross-modal constraints. Although SynGR also adopts multi-view objectives, these are processed in parallel within each batch, resulting in a lower per-iteration overhead. This efficiency is further supported by the lightweight design of our modules, where saliency diagnosis reuses existing attention maps with marginal overhead and synergistic contrastive learning incurs negligible additional cost.
\begin{figure}[t]
	\centering
	\begin{subfigure}[t]{0.48\linewidth}
		\centering
		\includegraphics[width=\linewidth]{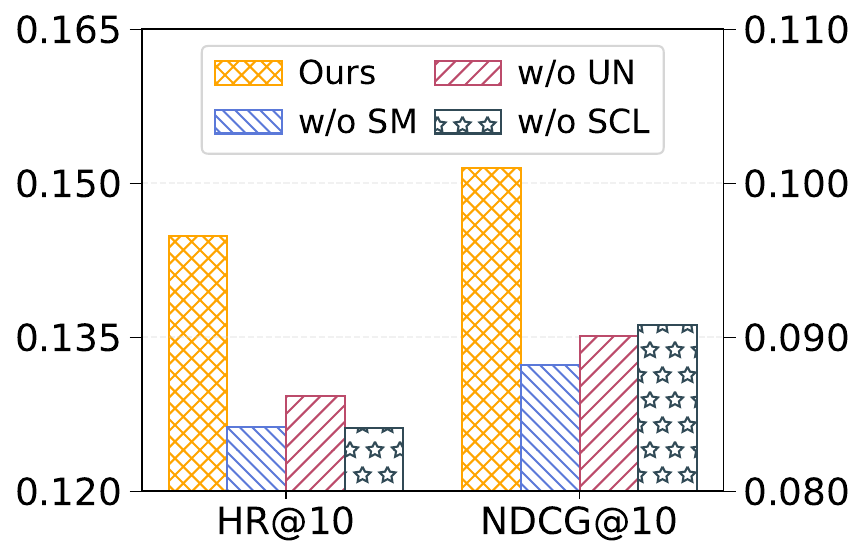}
		\vspace{-20pt}
		\caption{Arts}
		\label{fig:ablation-ue}
	\end{subfigure}%
	\hfill
	\begin{subfigure}[t]{0.48\linewidth}
		\centering
		\includegraphics[width=\linewidth]{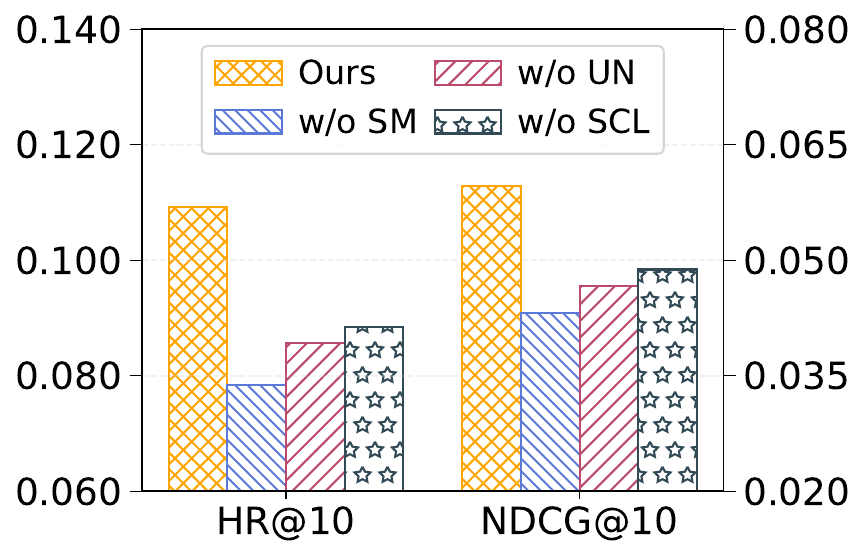}
				\vspace{-20pt}
		\caption{Games}
		\label{fig:ablation-deen}
	\end{subfigure}
	\caption{Ablation study of SynGR on Arts and Games datasets.}
	\label{ablation}
\end{figure}
At inference time, SynGR achieves speedups of 6.0\% on Arts and 4.0\% on Instruments, highlighting its suitability for real-time deployment. Moreover, SynGR converges faster than MACRec, reaching stable validation loss around epoch 10 on Instruments, whereas MACRec requires nearly 18 epochs. This accelerated convergence suggests suppressing modality shortcuts enables SynGR to capture informative synergistic features early in training, leading to more efficient optimization.



\subsection{Ablation Study (RQ3)}
To validate the effectiveness of the individual components, we conduct an ablation study by comparing the full SynGR model against three specific variants. The implementations of these variants are described as follows:

\begin{itemize}[leftmargin=*,labelindent=0pt,topsep=-3pt,itemsep=-3pt]
	\item \textbf{w/o SM}: In this variant, the saliency-based masking strategy is replaced by a random masking protocol to evaluate the importance of targeted feature suppression.
	\item \textbf{w/o UN}: This variant substitutes the constructed unimodal shortcut negative with standard random negative samples from the current training batch to test the impact of shortcut-specific regularization.
	\item \textbf{w/o SCL}: This variant removes the synergistic contrastive learning objective entirely and relies solely on the generative next-token prediction loss.
\end{itemize}

As illustrated in Figure~\ref{ablation}, SynGR consistently achieves the best performance across all metrics on the Arts and Games datasets, validating the necessity of each integrated module. Replacing saliency-based masking strategy with random masking leads to noticeable performance degradation, indicating that unguided perturbations fail to target shortcut features in the dominant modality and thus do not prevent shortcut-driven learning. Removing the unimodal shortcut view construction also degrades performance, suggesting that standard in-batch negatives provide insufficient regularization, while the constructed unimodal view serves as an effective hard negative that penalizes over-reliance on single-modality redundancy. Moreover, eliminating the synergistic contrastive objective results in further performance drops, particularly on the Games dataset, highlighting that the generative objective alone is insufficient to overcome modality dominance. Overall, these results demonstrate that SynGR benefits from its unified design, which explicitly targets inferential shortcuts and enforces cross-modal interaction to effectively unleash synergistic information.
\begin{figure*}[t] 
	\centering
	\includegraphics[width=0.97\linewidth]{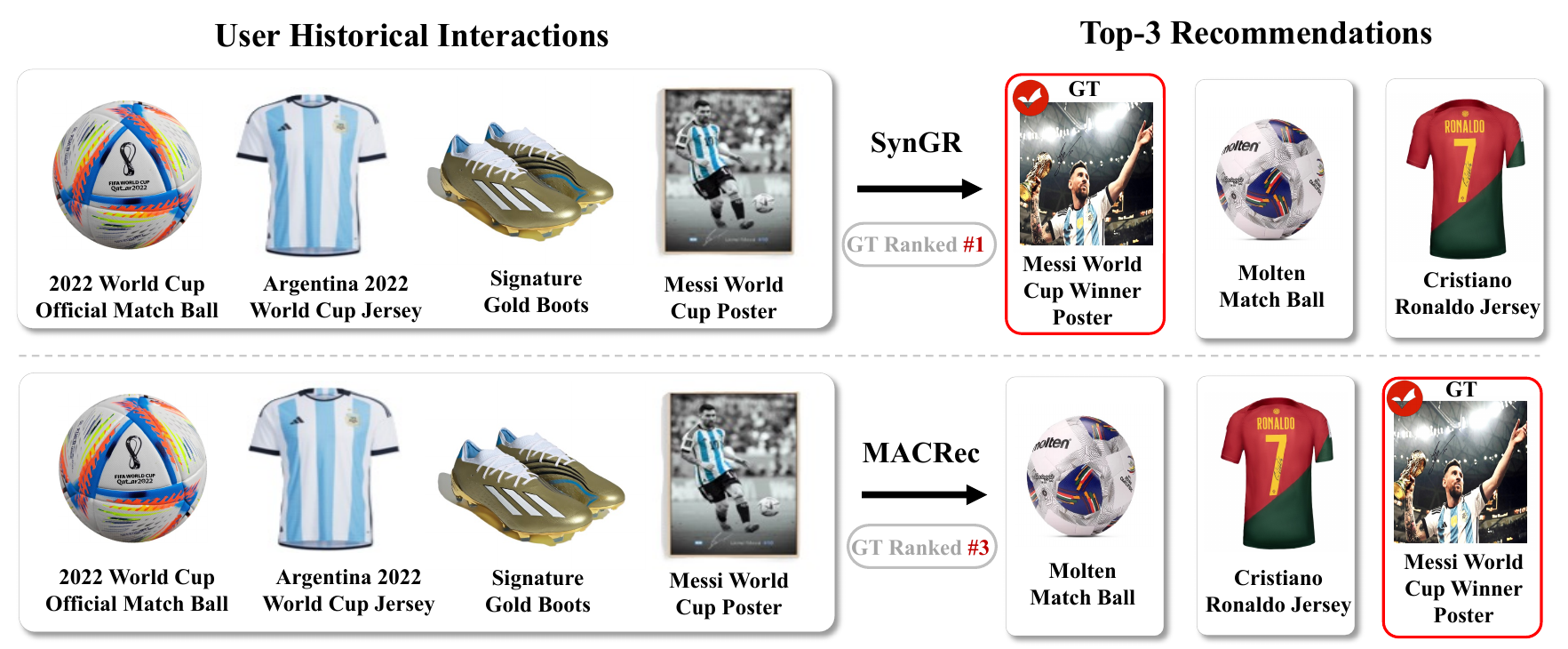} 
	\caption{Qualitative comparison between SynGR and MACRec. Given the same user history, SynGR ranks the ground-truth (GT) item at the top position, whereas MACRec ranks it only third, favoring items with coarse  similarity to football- or jersey-related concepts.} 
	\label{case}
\end{figure*}

\subsection{Parameter Sensitivity Analysis (RQ4)}
In this section, we conduct a series of experiments to analyze the sensitivity of SynGR to three key parameters: the masking ratio $\text{r}$ in Eq. (\ref{con}), the weight $\lambda$ of contrastive loss in Eq. (\ref{eq:lsyn}), and the temperature coefficient $\tau$ in Eq. (\ref{eq:scl}). The corresponding results are visualized in Figure~\ref{fig:para_main_analysis}. 

\textbf{Study on Masking Ratio} $\text{r}$\textbf{.} The parameter $\text{r}$ controls the proportion of tokens suppressed in the dominant modality to disrupt inferential shortcuts. As shown in the first two columns of Figure~\ref{fig:para_main_analysis}, performance initially improves as $\text{r}$ increases, peaking consistently within the range $[0.2, 0.4]$. This trend corroborates our hypothesis: moderate suppression effectively forces the model to explore cross-modal interactions, thereby mining synergistic semantics. However, extreme masking ($\text{r} > 0.5$) degrades both HR@10 and NDCG@10, indicating that excessive suppression removes essential semantic information. These results underscore the necessity of a balanced adaptive masking strategy.

\textbf{Study on Loss Weight $\lambda$.} The parameter $\lambda$ balances the auxiliary synergistic contrastive loss with the primary generative objective. As shown in the third column of Figure~\ref{fig:para_main_analysis}, performance on the Games dataset improves as $\lambda$ increases from $0.001$, peaking at $\lambda = 0.003$. A small $\lambda$ under-emphasizes synergy extraction, while a large $\lambda$ ($>0.005$) causes performance to drop, suggesting that excessive regularization interferes with the primary task. 

\textbf{Study on Temperature Coefficient $\tau$.} Finally, we examine the impact of the temperature $\tau$ in the InfoNCE loss, which controls the sharpness of the latent space distribution. As shown in the rightmost column of Figure~\ref{fig:para_main_analysis}, performance remains stable across a reasonable range, peaking when $\tau$ is between $0.05$ and $0.07$. This suggests the learned representations are robust, with the contrastive objective effectively distinguishing between synergy-critical views and unimodal hard negatives without extensive hyperparameter tuning. 
More parameter studies are provided in Appendix \ref{pa}.

\subsection{Case Study (RQ5)}
To provide a more intuitive understanding of the superiority of SynGR, we conduct a case study by comparing the recommendation results of SynGR and the strongest baseline MACRec. As shown in Figure \ref{case}, given a user history containing a 2022 World Cup match ball, an Argentina World Cup jersey, gold boots, and a Messi World Cup poster, the ground-truth next item is a Messi World Cup Winner Poster. This recommendation requires integrating textual cues such as “Messi”, “Argentina”, and “World Cup” with visual cues such as jersey style, gold-themed football elements, and poster-like layouts, rather than relying on a single modality.

SynGR successfully ranks the ground-truth item as the top recommendation, while MACRec places a Molten match ball and a Cristiano Ronaldo jersey ahead of it. This indicates that MACRec tends to focus on dominant unimodal similarities, such as football appearance or jersey-related semantics, whereas SynGR better captures the cross-modal synergistic intent centered on “Messi winning the World Cup”. This case provides qualitative evidence that the proposed saliency-aware masking and synergistic contrastive learning mechanisms can effectively suppress unimodal shortcuts and encourage the model to discover more discriminative cross-modal synergistic information.

\section{Conclusion}
In this paper, we studied the role of synergistic information in multimodal GR and identified a synergy gap caused by shortcut-driven optimization under generative objectives. To address this problem, we proposed SynGR, a synergistic generative recommendation framework that disrupts unimodal shortcuts and promotes the use of cross-modal dependencies during generation. Experiments on three datasets showed SynGR consistently improves recommendation performance, achieving an average gain of approximately 10\%.

Although SynGR takes an initial step toward modeling modal synergy for GR, there is room for improvement. First, the current framework focuses primarily on discovering item-side synergistic semantics, incorporating interaction-driven collaborative  signals could further enhance personalized recommendation. Additionally, extending the paradigm to richer modality combinations, such as audio and video, would be an interesting direction for testing its broader applicability across diverse recommendation scenarios.


\section*{Acknowledgments}
This work was supported by the National Key Research and Development Program of China under Grant Nos. 2024YFF0729003, the National Natural Science Foundation of China under Nos. 62176014, 62276015, 62206266, the Fundamental Research Funds for the Central Universities.

\section*{Impact Statement} This work focuses on the technical aspects of generative recommendation and does not have direct negative societal implications. The proposed method is evaluated on standard public datasets and does not involve the use of personal information or controversial applications of generative AI.

\bibliography{example_paper}
\bibliographystyle{icml2026}

\newpage
\appendix
\onecolumn
\section{Proof of Lemma 1}
\label{proof}

Let $(\mathbf{X}_{\mathbf{v}}, \mathbf{X}_{\mathbf{t}})$ denote the multimodal input consisting of visual and textual histories, $\mathbf{Y}$ be the target item identifier, and $\widehat{\mathbf{X}}=\phi(\mathbf{X}_{\mathbf{v}},\mathbf{X}_{\mathbf{t}})$ be a synergy-preserving view as defined in Definition~1.
We consider the following Markov chain:
\begin{equation}
	(\mathbf{X}_{\mathbf{v}}, \mathbf{X}_{\mathbf{t}})
	\xrightarrow{\phi }
	\widehat{\mathbf{X}}
	\xrightarrow{\text{Transformer}}
	\mathbf{Z}_{\text{syn}}
	\xrightarrow{\text{Predictor}}
	\mathbf{Y}.
	\label{eq:markov}
\end{equation}
Intuitively, $\mathbf{Z}_{\text{syn}}$ is expected to remain predictive of $\mathbf{Y}$ while avoiding reliance on information that can be recovered from either modality alone. We show that, under this construction, the resulting representation is dominated by synergistic information under the PID framework.

Under the Partial Information Decomposition (PID) framework~\cite{tax2017partial}, the joint mutual information between the two sources and the target can be decomposed into four non-negative components:
\begin{equation}
	\mathrm{I}(\mathbf{X}_{\mathbf{v}}, \mathbf{X}_{\mathbf{t}};\mathbf{Y})
	=
	\mathbf{R}
	+
	\mathbf{U}_{\mathbf{v}}
	+
	\mathbf{U}_{\mathbf{t}}
	+
	\mathbf{S}.
	\label{eq:pid}
\end{equation}
Here, $\mathbf{R}$ denotes the information about $\mathbf{Y}$ redundantly available from either modality, $\mathbf{U}_{\mathbf{v}}$ and $\mathbf{U}_{\mathbf{t}}$ denote the modality-specific unique information contributed by the visual and textual streams respectively, and $\mathbf{S}$ captures the synergistic information that emerges only through their joint observation.

We begin by establishing an upper bound on the predictive information carried by $\mathbf{Z}_{\text{syn}}$. From the Markov chain in Eq.~\eqref{eq:markov}, the Data Processing Inequality (DPI)~\cite{beaudry2011intuitive} implies that any representation derived from $\widehat{\mathbf{X}}$ cannot contain more information about $\mathbf{Y}$ than $\widehat{\mathbf{X}}$ itself:
\begin{equation}
	\mathrm{I}(\mathbf{Z}_{\text{syn}};\mathbf{Y}) \le \mathrm{I}(\widehat{\mathbf{X}};\mathbf{Y}).
	\label{eq:dpi}
\end{equation}
Moreover, by the Joint Sufficiency property of $\phi$ (Definition~1), the transformation $\widehat{\mathbf{X}}$ preserves, up to approximation, all task-relevant information in the original multimodal input:
\begin{equation}
	\mathrm{I}(\widehat{\mathbf{X}};\mathbf{Y}) \approx \mathrm{I}(\mathbf{X}_{\mathbf{v}},\mathbf{X}_{\mathbf{t}};\mathbf{Y}).
	\label{eq:suff}
\end{equation}
Combining Eq.~\eqref{eq:dpi} and Eq.~\eqref{eq:suff}, we obtain
\begin{equation}
	\mathrm{I}(\mathbf{Z}_{\text{syn}};\mathbf{Y})
	\lesssim
	\mathbf{R}+\mathbf{U}_{\mathbf{v}}+\mathbf{U}_{\mathbf{t}}+\mathbf{S},
	\label{eq:upper}
\end{equation}
where the decomposition follows directly from Eq.~\eqref{eq:pid}.

We now examine how the unimodal obscuration constraints shape the composition of $\mathbf{Z}_{\text{syn}}$. The central idea is to prevent $\mathbf{Z}_{\text{syn}}$ from encoding information that is already accessible from either unimodal stream alone. A natural and widely adopted way to express this requirement is to enforce small mutual information between $\mathbf{Z}_{\text{syn}}$ and each modality:
\begin{equation}
	\mathrm{I}(\mathbf{Z}_{\text{syn}};\mathbf{X}_{\mathbf{v}})\le \epsilon,
	\qquad
	\mathrm{I}(\mathbf{Z}_{\text{syn}};\mathbf{X}_{\mathbf{t}})\le \epsilon,
	\label{eq:obsc}
\end{equation}
for a small $\epsilon>0$. Such constraints are standard in information bottleneck and invariance-based learning, where the goal is to retain task-relevant information while suppressing shortcut or nuisance factors.

Under the PID semantics, any information about $\mathbf{Y}$ that is accessible from $\mathbf{X}_{\mathbf{v}}$ alone must belong to either the redundant component $\mathbf{R}$ or the visual-unique component $\mathbf{U}_{\mathbf{v}}$, since
\begin{equation}
	\mathrm{I}(\mathbf{X}_{\mathbf{v}};\mathbf{Y})=\mathbf{R}+\mathbf{U}_{\mathbf{v}}.
	\label{eq:iv}
\end{equation}
An analogous relation holds for the textual modality:
\begin{equation}
	\mathrm{I}(\mathbf{X}_{\mathbf{t}};\mathbf{Y})=\mathbf{R}+\mathbf{U}_{\mathbf{t}}.
	\label{eq:it}
\end{equation}
Therefore, the constraints in Eq.~\eqref{eq:obsc} explicitly limit the extent to which $\mathbf{Z}_{\text{syn}}$ can encode information attributable to $\mathbf{R}$, $\mathbf{U}_{\mathbf{v}}$, or $\mathbf{U}_{\mathbf{t}}$, all of which are individually recoverable from at least one modality. As a result, up to an approximation controlled by $\epsilon$, the only $\mathbf{Y}$-relevant information that can be consistently preserved in $\mathbf{Z}_{\text{syn}}$ is the portion that is not accessible from either modality in isolation, namely, the synergistic component $\mathbf{S}$.

Taken together, Eq.~\eqref{eq:upper} and Eq.~\eqref{eq:obsc} yield the following interpretation: the learning objective encourages $\mathbf{Z}_{\text{syn}}$ to remain predictive of $\mathbf{Y}$ through Joint Sufficiency, while the unimodal obscuration constraints systematically suppress shortcut-dominant information. Under the PID decomposition, the only component that satisfies both requirements is the synergistic term. Consequently, at the optimum (or in the limit $\epsilon\to 0$), $\mathbf{Z}_{\text{syn}}$ is dominated by synergistic information:
\begin{equation}
	\mathrm{I}(\mathbf{Z}_{\text{syn}};\mathbf{Y}) \approx \mathbf{S}.
	\label{eq:concl}
\end{equation}
This completes the proof. $\square$
\paragraph{Remark.}
We stress that Eq.~\eqref{eq:concl} should be interpreted as an inductive-bias characterization rather than a strict finite-sample guarantee. In practical high-dimensional settings, exact elimination of PID components is generally infeasible; instead, the proposed objective and constraints bias the optimization toward representations whose predictive power primarily arises from cross-modal interaction, as corroborated by empirical evidence~\citep{liang2021multi,ye2022noise}.

\section{Baselines Details}
\label{app:baselines}

We compare SynGR with the following representative baseline methods:

\begin{itemize}
	\item \textbf{GRU4Rec} \cite{hidasi2015session} pioneers the application of Recurrent Neural Networks (RNNs) with Gated Recurrent Units to capture the dynamic evolution of user interests within session-based data.
	
	\item \textbf{SASRec} \cite{kang2018self} applies a self-attention mechanism within a unidirectional sequential model to identify relevant items from a user's historical interactions.
	
	\item \textbf{BERT4Rec} \cite{sun2019bert4rec} adapts the deep bidirectional self-attention architecture of BERT to recommendation tasks, utilizing a Cloze objective to predict masked items based on both left and right contexts.
	
	\item \textbf{FDSA} \cite{zhang2019feature} incorporates context features into sequential recommendation by employing two distinct self-attention blocks to capture transitions at both the item and feature levels.
	
	\item \textbf{S$^3$-Rec} \cite{zhou2020s3} is a pre-training framework based on self-supervised learning that leverages mutual information maximization to model the correlations between items, attributes, and subsequences.
	
	\item \textbf{VQ-Rec} \cite{hou2023learning} employs a vector quantization mechanism to map items into discrete codes, facilitating effective knowledge transfer across different domains by learning transferable item representations.
	
	\item \textbf{MISSRec} \cite{wang2023missrec} proposes a transfer learning-based framework designed to effectively map multimodal features (visual and textual) into the sequential recommendation process.
	
	\item \textbf{P5-CID} \cite{geng2022recommendation,hua2023index} unifies various recommendation tasks into a shared text-to-text framework based on the T5 architecture, treating recommendation essentially as a natural language generation problem. We report results for the P5-CID variant.
	
	\item \textbf{VIP5} \cite{geng2023vip5} extends the P5 paradigm by integrating visual and textual modalities into a foundation model structure to enhance personalization capabilities through multi-modal understanding.
	
	\item \textbf{TIGER} \cite{rajput2023recommender} introduces a generative retrieval approach that utilizes RQ-VAE to assign unique hierarchical semantic IDs to items and trains a Transformer to generate these identifiers auto-regressively.
	
	\item \textbf{MQL4GRec} \cite{zhai2025multimodal} presents a method to transform multimodal information into a discrete quantized language, allowing the generative model to effectively utilize rich side information during the recommendation process.
	
	\item \textbf{MACRec} \cite{zhang2025multi} stands as the current state-of-the-art generative model, which constructs superior semantic IDs through a multi-aspect cross-modal quantization framework to align diverse modalities.
\end{itemize}

\begin{center}
	\centering
	\begin{minipage}[t]{0.54\textwidth}
		\captionof{table}{Pre-training dataset statistics.}
		\label{tab:dataset_statistics}
		\label{pre}
		\vspace{0.05in}
		\centering
		\resizebox{\linewidth}{!}{
			\renewcommand{\arraystretch}{1.2}
			\begin{tabular}{lrrrrr}
				\toprule
				& \multicolumn{3}{c}{\textbf{Scale}} & \multicolumn{2}{c}{\textbf{Characteristics}} \\
				\cmidrule(lr){2-4} \cmidrule(lr){5-6}
				\textbf{Dataset} & \textbf{Users} & \textbf{Items} & \textbf{Interactions} & \textbf{Sparsity} & \textbf{Avg. Len} \\
				\midrule
				Pet Supplies & 183,697 & 31,986 & 1,571,284 & 99.97\% & 8.55 \\
				Cell Phones & 123,885 & 38,298 & 873,966 & 99.98\% & 7.05 \\
				Automotive & 105,490 & 39,537 & 845,454 & 99.98\% & 8.01 \\
				Tools & 144,326 & 41,482 & 1,153,959 & 99.98\% & 8.00 \\
				Toys & 135,748 & 47,520 & 1,158,602 & 99.98\% & 8.53 \\
				Sports & 191,920 & 56,395 & 1,504,646 & 99.99\% & 7.84 \\
				\midrule
			\end{tabular}
		}
	\end{minipage}
	\hfill
	\begin{minipage}[t]{0.42\textwidth}
		\captionof{algorithm}{Normalized PID Component Computation}
		\label{alg:pid}
		\vspace{3pt}
		\hrule
		\vspace{0.02in}
		\fontsize{7}{10}\selectfont
		\renewcommand{\algorithmicinput}{\textbf{Input:}}
		\renewcommand{\algorithmicoutput}{\textbf{Output:}}
		\begin{algorithmic}[1]
			\INPUT \begin{minipage}[t]{0.80\linewidth}
				Metric evaluation function $M(\cdot)$ \\
				Ground-truth targets $\mathbf{Y}_{\text{true}}$, \\
				Predictions: text-only $\mathbf{Y}_t$, vision-only $\mathbf{Y}_v$,\\  joint bi-modal $\mathbf{Y}_j$
			\end{minipage}
			\OUTPUT Normalized PID components: $\mathbf{S}$, $\mathbf{R}$, $\mathbf{U}_{\mathbf{t}}$, $\mathbf{U}_{\mathbf{v}}$
			\STATE $P_t \leftarrow M(\mathbf{Y}_t,\mathbf{Y}_{\text{true}})$, $P_v \leftarrow M(\mathbf{Y}_v,\mathbf{Y}_{\text{true}})$
			\STATE $P_j \leftarrow M(\mathbf{Y}_j,\mathbf{Y}_{\text{true}})$
			\STATE $\mathbf{S} \leftarrow \frac{\max\!\left(0,\; P_j-\max(P_t,P_v)\right)}{P_j}$,
 $\mathbf{R} \leftarrow \frac{\min(P_t,P_v)}{P_j}$
			\STATE $\mathbf{U}_{\mathbf{t}} \leftarrow \frac{P_t-\min(P_t,P_v)}{P_j}$,
 $\mathbf{U}_{\mathbf{v}} \leftarrow \frac{P_v-\min(P_t,P_v)}{P_j}$
			\STATE \textbf{return} $\mathbf{S}, \mathbf{R}, \mathbf{U}_{\mathbf{t}}, \mathbf{U}_{\mathbf{v}}$
		\end{algorithmic}
		\vspace{0.02in}
		\hrule
	\end{minipage}
	\vspace{-0.05in}
\end{center}

\section{Implementation Details}
\label{id}

Following the protocols in \cite{zhai2025multimodal, zhang2025multi}, we adopt a two-stage training paradigm. SynGR is first pre-trained on a massive multimodal corpus derived from six Amazon categories: Pet Supplies, Cell Phones, Automotive, Tools, Toys, and Sports. This corpus comprises approximately 7.1M interactions among 884,918 users and 255,181 items, providing a robust foundation for cross-modal semantic alignment. The detailed statistics of these pre-training datasets are summarized in Table \ref{pre}. After pre-training, the model is fine-tuned on specific downstream target domains.

 We employ T5 \cite{raffel2020exploring} as our generative backbone. Both the encoder and decoder consist of a 4-layer Transformer structure, with 6 self-attention heads and a hidden dimension of $d=64$ per layer. For feature extraction, we utilize LLaMA for textual semantics and ViT-L/14 for visual representations. The RQ-VAE module is configured with a codebook size of $M=256$ and 4 quantization layers to discretize multimodal content into hierarchical identifiers.

 The framework is optimized using AdamW~\cite{li2025frequency,zhou2022understanding,zhou2026dynamic}. 
 We fix the batch size to 1024 to ensure stable large-scale training and set the contrastive temperature $\tau$ to 0.07 for the synergistic objective throughout all experiments. 
 The learning rate is selected from $\{1\text{e-}4, 5\text{e-}4, 1\text{e-}3\}$, weight decay from $\{0, 0.01, 0.1\}$, mask ratio from $\{0, 0.1, \ldots, 1.0\}$, warmup ratio from $\{0, 0.05, 0.1\}$, contrastive weight from $\{0.001, 0.003, 0.005, 0.007, 0.009\}$, and batch size from $\{512, 1024\}$, with the best configuration selected based on validation performance. 
 The feature dimension is fixed to 256 and all models are trained for 200 epochs. The calculation details of the PID are shown in Algorithm \ref{alg:pid}.

\section{Parameter Sensitivity Analysis}
Additional sensitivity results are reported in Figure \ref{ab_appidx}.
\label{pa}
\begin{figure*}[!]
	\centering
	\includegraphics[width=0.19\textwidth]{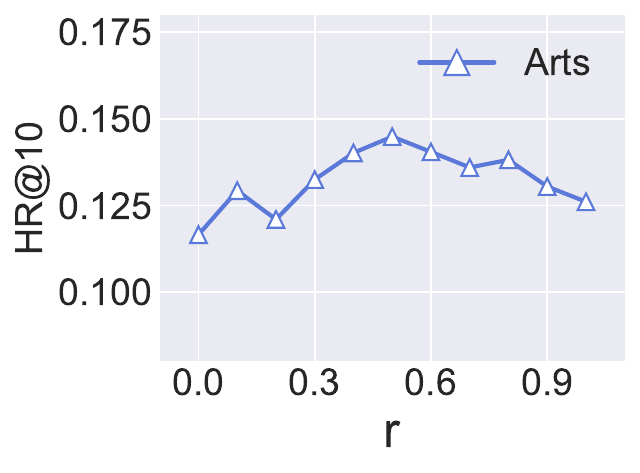}
	\hfill
	\includegraphics[width=0.19\textwidth]{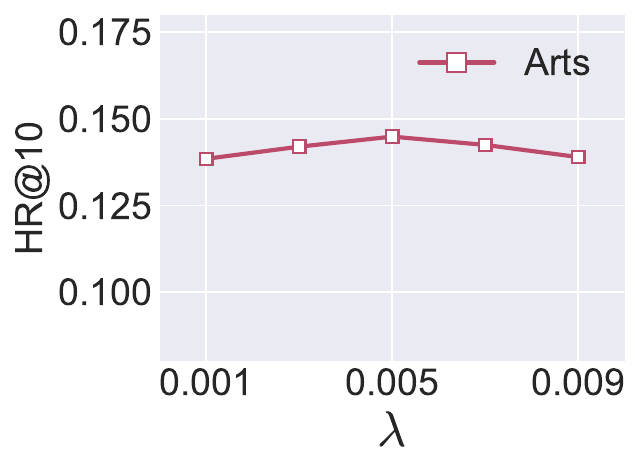}
	\hfill
	\includegraphics[width=0.19\textwidth]{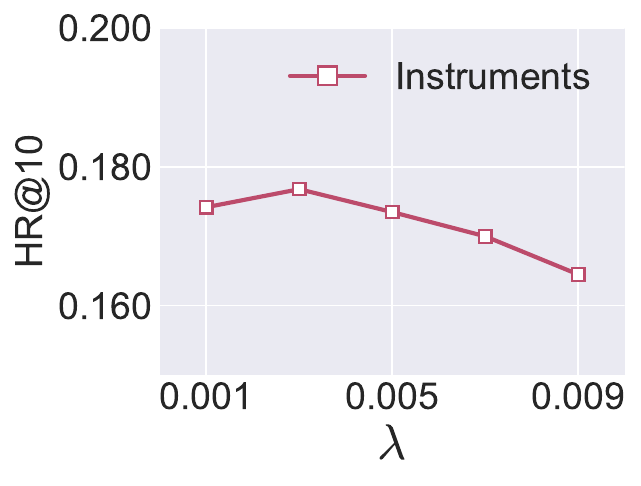}
	\hfill
	\includegraphics[width=0.19\textwidth]{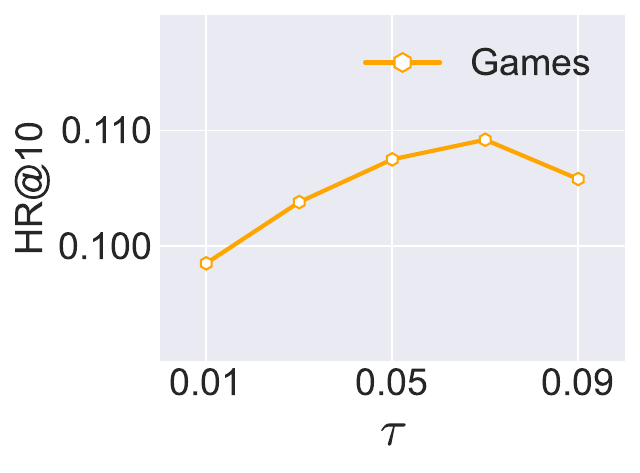}
	\hfill
	\includegraphics[width=0.19\textwidth]{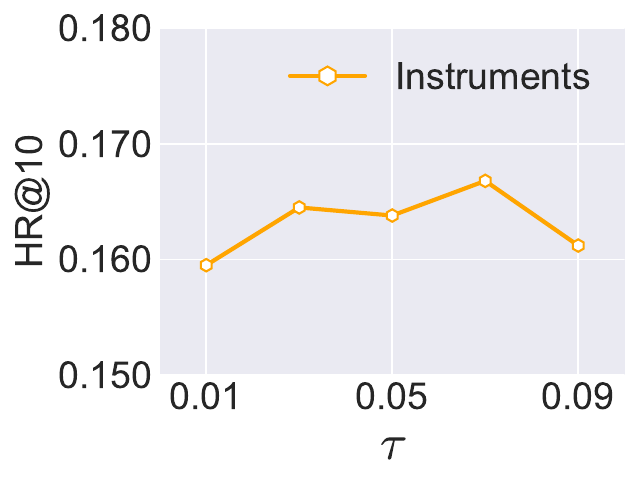}
	
	\vspace{1ex} 
	
	\includegraphics[width=0.19\textwidth]{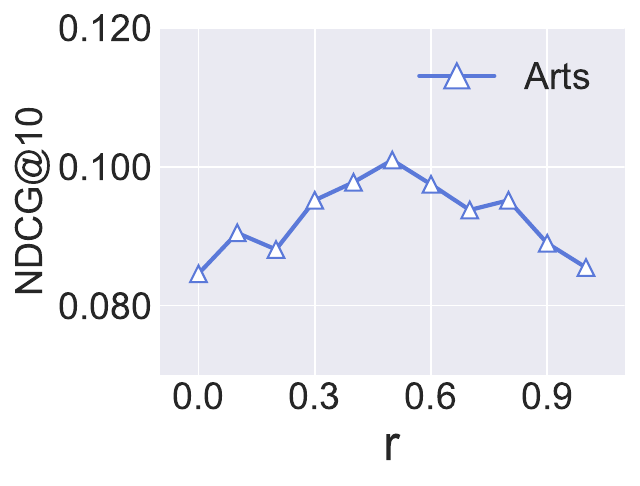}
	\hfill
	\includegraphics[width=0.19\textwidth]{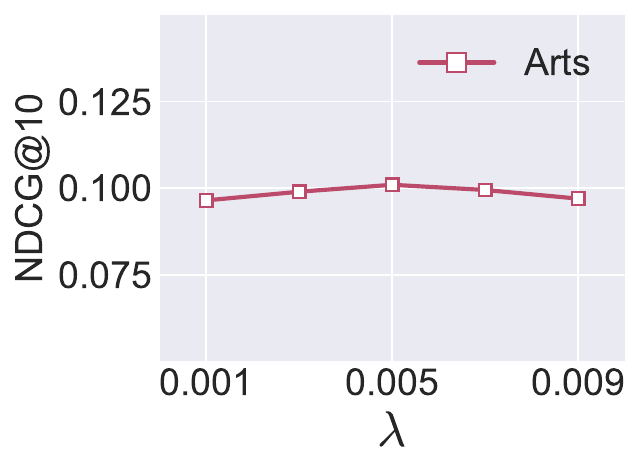}
	\hfill
	\includegraphics[width=0.19\textwidth]{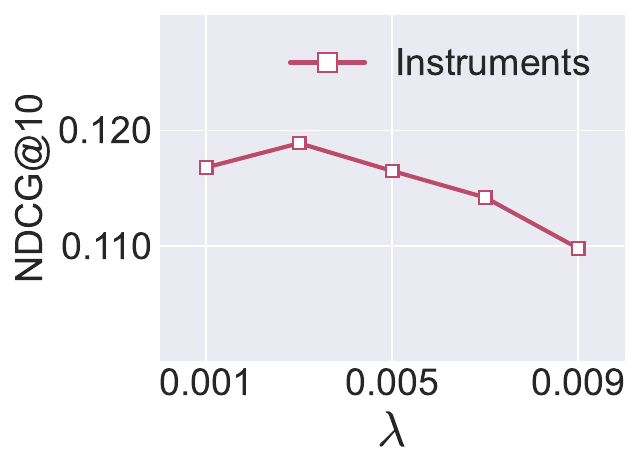}
	\hfill
	\includegraphics[width=0.19\textwidth]{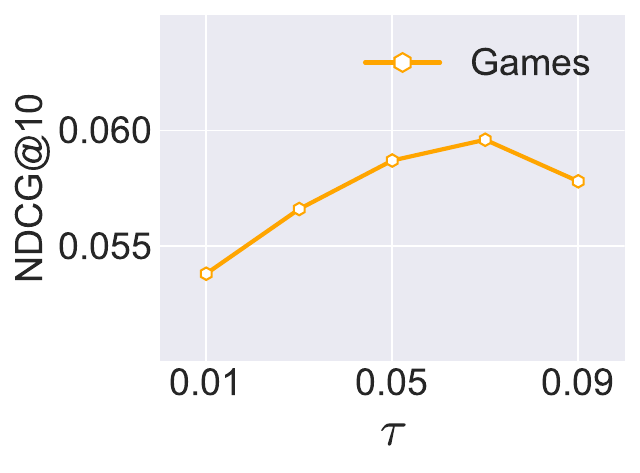}
	\hfill
	\includegraphics[width=0.19\textwidth]{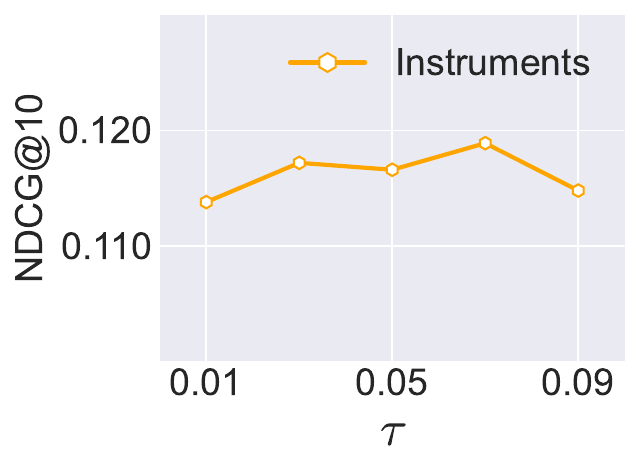}
	
	\caption{Additional sensitivity analysis on remaining datasets. }
	\label{ab_appidx}
\end{figure*}

\end{document}